\DeclareMathOperator*{\argmax}{arg\,max}
\theoremstyle{remark}
\newtheorem{theorem}{Theorem}
\newtheorem{corollary}{Corollary}
\begin{document}

\title{A Structured Family of Grassmannian Constellations via Geodesic Mapping for MIMO Noncoherent Communications}

\author{Álvaro Pendás-Recondo\orcidlink{0000-0002-8029-8024}, and Enrique Pendás-Recondo\orcidlink{0000-0002-0790-7372}
\thanks{The work of Álvaro Pendás-Recondo was supported in part by Ministerio de Ciencia e Innovación and Agencia Estatal de Investigación (MICIN)/(AEI)/10.13039/501100011033 under projects PID2023-146246OB-C32 (ANT4IT) and PID2024-160783OA-I00 (RARIN-6G); and in part by the  Gobierno del Principado de Asturias under grants Asturias-Sekuens/FEDER IDE-2024-000693, and “Severo Ochoa” Program Grant PA-22-BP21-116. The work of Enrique Pendás-Recondo was supported in part by Ministerio de Ciencia e Innovación and Agencia Estatal de Investigación (MICIN)/(AEI)/10.13039/501100011033 under project  PID2021-124157NB-I00, cofunded by “ERDF A way of making Europe”; and in part by Fundación Séneca-Agencia de Ciencia y Tecnología de la Región de Murcia, under project “Ayudas a proyectos para el desarrollo de investigación científica y técnica por grupos competitivos (Comunidad Autónoma de la Región de Murcia)”, included in “Programa Regional de Fomento de la Investigación Científica y Técnica (Plan de Actuación 2022)”, REF. 21899/PI/22. 

Álvaro Pendás-Recondo is with the Signal Theory and Communications Group, Department of Electrical Engineering, Universidad de Oviedo, 33203, Gijón, Spain (e-mail: pendasalvaro@uniovi.es). Enrique Pendás-Recondo is with the Algebra Group, Department of Mathematics, Universidad de Oviedo, 33007, Oviedo, Spain (e-mail: pendasenrique@uniovi.es).
}}



\maketitle

\begin{abstract}
This work presents a novel structured family of Grassmannian constellations for multiple-input multiple-output (MIMO) noncoherent communications over Rayleigh block-fading channels, where neither the transmitter nor the receiver has channel state information (CSI). The proposed constellation design is built upon the geodesic curves of the Grassmann manifold, thereby exploiting its underlying geometric structure. The resulting solution is limited in spectral efficiency (with a maximum constellation size of $4M^2$ points, where $M$ is the number of transmit antennas), targeting a rate in the range of $0.25$–$1$ bps/Hz. However, all space-time matrices resulting from this design exhibit the remarkable property of having a single nonzero entry per row, meaning that only one transmit antenna is active per time slot. This property significantly reduces hardware complexity and implementation cost, while also lowering power consumption, as only a single radio-frequency (RF) chain and power amplifier are required for transmission. Furthermore, within the constellation size limits, the proposed design achieves error performance comparable to state-of-the-art optimization-based unstructured designs, as validated through symbol error rate (SER) numerical results. It also enables simple yet effective bit labeling, confirmed by comparisons of bit error rate (BER) and SER, and reduces the computational complexity of the maximum-likelihood (ML) detector for Grassmannian constellations by a factor of $M$.

\end{abstract}

\begin{IEEEkeywords}
Multiple-input multiple-output (MIMO), noncoherent communications, unitary space-time modulation (USTM), Grassmannian constellations, geodesics, Grassmmann manifold, Rayleigh block-fading.
\end{IEEEkeywords}

\section{Introduction}
\IEEEPARstart{M}{ultiple}-input multiple-output (MIMO) communications have had a profound impact on wireless communications in recent decades, thanks to their enhanced spectral efficiency compared to single-input single-output (SISO) systems. This advantage has been extensively demonstrated not only mathematically, in terms of system capacity, but also in practice through their widespread deployment in modern wireless standards, albeit with associated challenges \cite{mimo1}.

\subsection{Motivation: Coherent versus Noncoherent Schemes and Unitary Space-Time Modulation}

One of the most significant challenges of MIMO is acquiring accurate channel state information (CSI). Channel state information at the transmitter (CSIT) in MIMO or multiple-input single-output (MISO) systems enables the application of digital precoding to achieve enhanced signal-to-noise ratio (SNR), diversity gain, or, only in MIMO, multiplexing gain.  Channel state information at the receiver (CSIR) is required in standard MIMO, MISO, single-input multiple-output (SIMO), and SISO systems to ensure correct symbol decoding. CSIR in multi-antenna systems is typically obtained from the observation of orthogonal pilot symbols sent periodically by the transmitter, whereas CSIT is acquired through feedback from the receiver.

Although accurate CSIR is generally feasible to obtain, its main drawback lies in the reduction of spectral efficiency caused by pilot symbol transmission. This limitation becomes pronounced when the number of antennas is large, even in slowly varying channels, or when the channel varies rapidly, even with relatively few antennas \cite{mimo_tran}. Regarding CSIT, its accurate acquisition is considerably more challenging, as it is affected by factors that do not compromise CSIR, including limited and quantized feedback and feedback delays that lead to outdated CSIT, among others \cite{imcsit}, \cite{ojom}. Moreover, the feedback channel always consumes additional power and spectral resources. 

In this context, noncoherent schemes, where neither the transmitter nor the receiver has CSI, are particularly attractive for short-packet communications with low-latency requirements and high user mobility, where conventional channel estimation is costly or unreliable, as well as for systems based on short bursts transmissions. While not adopted in current wireless standards, they represent a promising research direction with potential applications in paradigms such as ultra-reliable low-latency communications (URLLC),  Internet of Things (IoT) networks with significant energy and bandwidth constraints, and massive MIMO systems operating in rapidly time-varying channels \cite{thesis}, \cite{nonch_review}.

The problem of noncoherent MIMO was first considered in \cite{gt1} under Rayleigh block-fading. In this model, the channel entries are independent and identically distributed (i.i.d.) and remain constant during the transmission of several symbols before adopting an independent realization drawn from a normal distribution. In \cite{gt2}, \cite{gt2_2}, and \cite{gt3}, further studies on the capacity of this scenario were conducted. Remarkably, noncoherent MIMO systems can achieve a significant fraction of the coherent capacity gain at high SNR, and this fraction increases with the number of time symbols over which the channel remains constant \cite{gt3}. Furthermore, the use of unitary space-time modulation is an optimal solution in this scenario, in the sense that it minimizes the union bound of error probability under the assumption of equal-energy signals \cite{gt2_2}. Consequently, this approach to the problem has been extensively studied \cite{gt2}, \cite{gt2_2}, \cite{gt3}, \cite{gt4}. Mathematically, unitary space-time modulation is based on transmitting space-time matrices that are Stiefel representatives of points in the complex Grassmann manifold \cite{ghb}, which is why this solution is commonly referred to as Grassmannian signaling.

\subsection{Related Works: Constellation Designs}

For noncoherent MIMO communications, the problem of designing a Grassmannian constellation consists of finding a set of matrices that minimizes symbol error probability (SEP), with energy and power normalization inherently guaranteed by the unitary space-time condition. For the asymptotic case with an infinite number of points, the optimal design approach is to randomly sample the Grassmann manifold. However, in practical scenarios with a finite number of constellation points, the problem becomes considerably more complex. The packing problem on the Grassmann manifold endowed with a Riemannian metric---which induces the natural notion of distance on the manifold, also known as geodesic distance---remains unsolved for an arbitrary number of points \cite{packgrass}. Furthermore, beyond the geodesic distance, other figures of merit have been investigated for minimizing SEP or pairwise error probability (PEP), which, depending on the scenario, may serve as more effective predictors. Examples include the chordal distance \cite{do1}, \cite{grassma}, the diversity product (DP) \cite{do2}, \cite{do3}, also known as the coherence criterion, and the asymptotic union bound (UB) \cite{ubp}, \cite{ub}.

In general, constellation designs in the literature can be classified into two categories: structured and unstructured \cite{thesis}, \cite{ub}. The first category imposes a certain structure on the space-time matrices, which simplifies both the design process and symbol decoding \cite{syste_di}, \cite{expmap1}, \cite{expmap2}, \cite{geo_di}, \cite{systema}, \cite{struct_can}. This is a considerable advantage, since the computational complexity of the maximum-likelihood (ML) detector for arbitrary Grassmannian constellations represents a major challenge for practical implementation as the number of points (and thus the spectral efficiency) increases. In particular, the constellations presented in \cite{systema} are also based on geodesic curves of the Grassmann manifold, although they do not exhibit the property of having a single nonzero entry per row. On the other hand, unstructured designs do not impose any specific restrictions and rely on numerical optimization, typically with the primary goal of improving error probability performance \cite{ub}, \cite{un1}, \cite{un2}, \cite{dp}, in which they generally surpass structured designs. However, they have the disadvantage of yielding non-deterministic solutions, and the absence of structure makes it difficult to reduce the computational complexity of ML detection. To the best of our knowledge, the work in \cite{ub}, based on gradient optimization of the UB criterion, provides the best unstructured Grassmannian constellation designs in terms of reported numerical SER results, followed closely by the earlier work in \cite{dp}, which was based on gradient optimization of the DP (coherence criterion).

Related to this problem, the noncoherent SIMO scenario (which can be considered a particular instance of the MIMO case) has recently garnered considerable attention \cite{cubesplit}, \cite{simo_grass}, \cite{simo_grass2}, \cite{simo_nograss1}, \cite{simo_nograss2}. In this context, unitary (or Grassmannian) signaling has also been shown to be an effective solution in practical scenarios \cite {cubesplit}, \cite{simo_grass}, \cite{simo_grass2}. The inherent simplicity of the SIMO case compared to MIMO enables low-complexity designs that are particularly appealing for uplink communications. Finally, a promising research direction is the design of Grassmannian constellations for noncoherent multi-user MIMO, a field where some recent works have also contributed \cite{joint_geo}, \cite{multi-user}.

\subsection{Contributions}\label{cont}
The main contributions of this work are summarized as follows:

\begin{itemize}
    \item We present a novel family of structured Grassmannian constellations. The proposed solution is built upon the geodesic curves of the Grassmann manifold, thereby exploiting its underlying geometric structure. The design consists of computing geodesics on the Grassmann manifold from a chosen initial point, along with an appropriate set of initial velocities. These geodesics are then mapped to extract constellation points with desirable properties for error performance.

    \item The proposed family is limited in spectral efficiency (with a maximum constellation size of $4M^2$ points, where $M$ is the number of transmit antennas), but offers one significant advantage: all resulting space-time matrices have a single nonzero element per row, i.e., only one active antenna is required per time slot. This has important implications for practical implementation: at the transmitter, only one digital-to-analog converter (DAC) and radio-frequency (RF) chain is required, together with a switch to select the corresponding antenna at each time slot. This not only simplifies the hardware but also reduces power consumption by using a single power amplifier. These differences in complexity, cost, and power consumption are already significant even for $M=2$ transmit antennas. We also present a preliminary analysis of the feasibility of the hardware implementation of the proposed scheme.

    \item Antenna selection in MIMO is a well-known solution in the literature \cite{antenna_selection}, typically offering a tradeoff between reduced hardware complexity and cost on the one hand, and system performance on the other. However, the proposed structured family of Grassmannian constellations achieves comparable symbol error rate (SER) performance to state-of-the-art optimization-based unstructured designs, as validated through numerical simulations.

    \item Additionally, the proposed design reduces the computational complexity of the ML detector by a factor of $M$ and also simplifies bit labeling. To validate the latter point, comparisons between SER and bit error rate (BER) are presented.

    \item Although we again remark that our solution is limited in spectral efficiency, the considered range of constellation points is arguably one of the most interesting for the application of noncoherent MIMO, since larger constellations are subject to a high computational detection cost, especially in unstructured designs aimed at optimizing error performance. In this sense, the presented approach offers relatively low spectral efficiencies ($0.25$–$1$ bps/Hz) but achieves error performance comparable to state-of-the-art solutions for noncoherent MIMO within this range, while enabling low-cost and low-complexity hardware implementation. In fact, the proposed design may find applicability in scenarios where SIMO has previously been considered, offering enhanced error performance by simply adding a switching scheme and multiple antennas at the transmitter, without the need for additional RF chains or power amplifiers. Taking these considerations into account, the proposed family of constellations is not well suited to high-data-rate regimes. However, it is particularly appealing for systems subject to stringent cost constraints (typical of IoT deployments) and/or power constraints (typical, for instance, of unmanned aerial vehicle (UAV) communications), as well as for applications with the strict reliability and latency requirements of URLLC (including safety-critical vehicular communication links), which, in general, impose short-packet, low-data-rate and low spectral efficiency ($\leq 1$ bps/Hz) transmissions.

    \item Finally, we note that, during the review process of this manuscript, \cite{sparse} was published, proposing a method based on Schubert cells for designing sparse Grassmannian constellations capable of satisfying the single-nonzero-element-per-row condition. However, the SER results reported in that work, which are limited to $M=2$ and $M=3$, are outperformed by optimization-based approaches, unlike our proposal. Nevertheless, a remarkable feature of that design is its ability to achieve even higher levels of sparsity.

\end{itemize}

The remainder of this paper is organized as follows. Section \ref{SM} introduces the system model for noncoherent MIMO communications with Grassmannian signaling. Section \ref{SEP} reviews the metrics related to the Grassmann manifold that can predict PEP and SEP and thus guide the design of constellations. Section \ref{cd} describes the proposed constellation design, while Section \ref{res} provides numerical results based on Monte Carlo simulations for SER and BER. Section \ref{HWI} presents a preliminary evaluation of certain hardware implementation aspects. Finally, conclusions are drawn in Section \ref{con}. In addition, Appendix\ref{th} provides the theoretical background on the Grassmann and Stiefel manifolds required for this work. Appendix\ref{wh} shows the construction of the so-called Weyl-Heisenberg basis, which is used in the constellation design. Appendix\ref{pr1}, Appendix\ref{pr2}, and Appendix\ref{pr3} contain the mathematical proofs of the proposed theorems.

\textit{Notation}: Scalars are denoted by italic letters (either lowercase or uppercase, $x$, $X$), while matrices are denoted by bold uppercase letters, $\mathbf{X}$. To avoid confusion with differential geometry notation, we use the notation $\mathbf{X}$ (and refer to it as a matrix) even when $\mathbf{X}$ is a column or row matrix, and reserve the term vector for tangent vectors to a manifold (which are also represented as matrices that fulfill certain conditions). Subspaces and sets are denoted by calligraphic letters, $\mathcal{X}$. The transpose, conjugate transpose, determinant, and trace of a matrix are denoted by $\left(\cdot\right)^T$, $\left(\cdot\right)^H$, $\operatorname{det}\left(\cdot\right)$, and $\operatorname{tr}\left(\cdot\right)$, respectively. The identity matrix of size $M$ is written as $\mathbf{I}_M$, and $\mathbf{0}_M$ denotes a square matrix of size $M$ with all entries equal to zero. The real and imaginary parts of a scalar or matrix are denoted by $\mathfrak{Re}(\cdot)$ and $\mathfrak{Im}(\cdot)$, respectively. The imaginary unit is denoted as $i$. The absolute value of a scalar, or the cardinality of a set, is denoted by $\left|\cdot\right|$. The floor of a scalar $X$, denoted as $\lfloor X \rfloor$, is defined as the unique integer $n \in \mathbb{Z}$ such that 
$n \le X < n + 1$. The expected value is denoted as $\operatorname{E}\{\cdot\}$. The Frobenius norm of a matrix is written as $\|\mathbf{X}\|_F$. A complex Gaussian distribution with mean $\mu$ and variance $\sigma^2$ is denoted by $\mathcal{CN}\left(\mu,\sigma^{2}\right)$. Appendix\ref{th} provides the necessary theoretical background on the Grassmann and Stiefel manifolds, along with their more specific notation.

\section{System Model}\label{SM}
\subsection{System Model}\label{smm}
Consider a MIMO wireless system where a transmitter equipped with $M$ antennas communicates with a receiver with $N$ antennas over a frequency-flat block-fading channel with a coherence of $T$ symbols, $T \geq 2M$. Consequently, it is assumed that the channel matrix $\mathbf{H} \in \mathbb{C}^{M \times N}$ remains constant during each block of $T$ symbols, and changes to an independent realization in the next block. The channel entry in the $i$-th row and $j$-th column is denoted by $h_{ij} \in \mathbb{C}$. All channel entries are i.i.d. and follow a normal distribution with zero mean and unit variance, i.e.,  $h_{ij} \sim  \mathcal{C}\mathcal{N}(0,1)$, according to classical Rayleigh fading. Furthermore, the channel realizations are unknown to both the transmitter and the receiver, a scenario commonly referred to as a noncoherent communication.

Within a block of $T$ symbols, the transmitter sends a matrix $\mathbf{X} \in \mathbb{C}^{T \times M}$, such that $\mathbf{X}^{H}\mathbf{X} = \mathbf{I}_M$. The matrix $\mathbf{X}$ is a Stiefel representative of a point $[\mathbf{X}]$ in the complex Grassmann manifold $\textup{Gr}_{\mathbb{C}}(T,M)$, as described in Appendix\ref{th}. For each time block, the transmit matrix is chosen uniformly from an alphabet $\mathcal{X}=\{ \mathbf{X}_1,\mathbf{X}_2, \ldots,\mathbf{X}_L\}$, so $\left|\mathcal{X}\right| = L$. The alphabet $\mathcal{X}$ is also denoted as the codebook or the constellation for the communication, and its elements, as the constellation points. For a given constellation, the rate of the communication is determined by $R=\log_2(L)/{T}$ (bps/Hz) and each codeword carries $\log_2(L)$ bits of information. 

The signal at the receiver over a time block is denoted as $\mathbf{Y} \in \mathbb{C}^{T \times N}$ and it is given by 

\begin{equation}\label{Y}
\mathbf{Y} = \mathbf{X}\mathbf{H} + \sqrt{\frac{M}{T\rho}}\mathbf{W},
\end{equation} where $\mathbf{W}\in \mathbb{C}^{T \times N}$ represents the additive white Gaussian noise (AWGN), i.e., its entries are i.i.d. with $w_{ij}\sim \mathcal{C}\mathcal{N}(0,1)$ and $\rho$ denotes the signal-to-noise ratio (SNR). Assuming Rayleigh block-fading, the optimal ML detector that minimizes error probability is  \cite{gt2}

\begin{equation}\label{ml}
\hat{\mathbf{X}} = \argmax_{\mathbf{X}\in\mathcal{X}} \operatorname{tr}\left( \mathbf{Y}^H \mathbf{P}_{[\mathbf{X}]}\mathbf{Y}\right),
\end{equation} where $\mathbf{P}_{[\mathbf{X}]} = \mathbf{X}\mathbf{X}^H$ is the projector onto the subspace $\operatorname{span}(\mathbf{X})$ (see Appendix\ref{th}).

\subsection{Grassmmannian Signaling Application}\label{gsa}

The capacity of the noncoherent block-fading channel model under consideration was first studied in \cite{gt1}. Unitary space-time modulation was proposed in \cite{gt2} as a solution to the problem and demonstrated optimal in \cite{gt2_2} in the sense that it minimizes the union bound of error probability under the assumption of equal-energy signals. In this context, note that Grassmannian signaling and unitary space-time modulation refer to the same concept, that is, $\mathbf{X}^H\mathbf{X} = \mathbf{I}_M$, which is equivalent to stating that the columns of $\mathbf{X}$ are orthonormal. In \cite{gt3}, it was shown that, at high SNR, the degrees of freedom (DoF) of the system are $M^*\left(1-\frac{M^*}{T}\right)$, where  $M^*=\min\{M, N,\lfloor T/2\rfloor\}$, and that the use of Grassmannian signaling is optimal for ergodic capacity when $T\geq\min\{M, N\}+N$. Furthermore, for a given $T$, the optimal number of transmit antennas is $\lfloor T/2 \rfloor$ (as additional transmit antennas do not increase capacity), while the number of receive antennas should be no less than $\lfloor T/2 \rfloor$. Along the same lines, the numerical mutual information results presented in \cite{gt4} also concluded that, at high SNR, mutual information is maximized when $M = \min \{N, \lfloor T/2 \rfloor\}$.

In the case $T < M + N$, Grassmannian signaling is no longer optimal \cite{gt5}, since it does not achieve capacity at high SNR. The corresponding capacity-achieving input signal, introduced in \cite{gt5}, is referred to as beta-variate space-time modulation and does not satisfy the equal-energy condition, as different transmitted matrices may have different energy levels. Nevertheless, for beta-variate space-time modulation, numerical results also indicate that the optimal number of transmit antennas is $M = \lfloor T/2 \rfloor$, while the rate gain becomes significant compared to Grassmannian signaling only when $N \gg T$.

The structured family of Grassmannian constellations presented in this paper is designed for any number of transmit antennas $M = 1,2,3,4,\ldots$, with $T = 2M$, and for any number of receive antennas, $N$. Based on previous results, the most suitable application scenario is when $T = M + N = 2M$, i.e., $N = M$ and $M = T/2$, in which case Grassmannian signaling is the optimal choice for noncoherent communication in the Rayleigh block-fading channel. Nonetheless, in Section \ref{res} SER and BER results are presented for different numbers of receive antennas for a given $M$, while maintaining $T = 2M$, in order to illustrate error performance differences consistent with the mutual information results reported in the literature. 

Note that the condition $T=2M$ can always be imposed by the system. For instance, if in a practical implementation the channel remains quasi-constant during an odd number of symbols, e.g., $T=7$, the system can transmit using a coherence block of $T=6$ with $M=3$ transmit antennas. Another option would be to use the full coherence block with $T = 7$, setting $M = \lfloor T/2 \rfloor = 3$ and $N = 4$, so that $M + N = T$, in which case Grassmannian signaling remains optimal. However, the increase in the DoF of the system is rather limited, according to the previously presented expression $M^*\left(1-\frac{M^*}{T}\right)$, with $M^*=\min\{M, N,\lfloor T/2\rfloor\}$ \cite{gt3}. Furthermore, the applicability of the scheme (even $T$ and $M$, with $M$ greater than $1$ but different from $N$) would be limited, and increasing $N$ beyond $\lfloor T/2 \rfloor$, which does not increase the DoF but provides array diversity, can always be applied whether $T$ is even or not.

\section{Metrics and Error Probability} \label{SEP}
The relationship between various metrics related to the Grassmann manifold and the error probability of a given constellation in the block-fading channel described in Section \ref{SM} has been extensively studied in the literature \cite{do1}, \cite{do2}, \cite{do3}, \cite{ubp}, \cite{ub}, \cite{dp}. Most of these metrics are related to the principal angles between two points on the Grassmann manifold, $[\mathbf{X}_1]$ and $[\mathbf{X}_2]$, defined as 

\begin{equation}\label{pans3}
        \theta_m = \arccos{(\sigma_m)} \in \left[0,\frac{\pi}{2}\right], \quad m = 1,\ldots, M,
\end{equation} where $\sigma_m\in [0,1]$ is the $m$-th largest singular value of $\mathbf{X}_1^H\mathbf{X}_2$, and the result is independent of the choice of the Stiefel representatives $\mathbf{X}_1$ and $\mathbf{X}_2$. The principal angles determine the geodesic distance between two points, $\textrm{d}_{\textrm{g}}$, given by

\begin{equation}
    \label{eq:geod_dists3}
        \textrm{d}_\textrm{g}\left([\mathbf{X}_1],[\mathbf{X}_2]\right) = \left(\sum_{m=1}^M \theta_m^2 \right)^{1/2},
\end{equation} which corresponds to the natural notion of Riemannian distance, as explained in Appendix\ref{th}, where the concepts of principal angles and geodesic distance are introduced in a more detailed and rigorous manner, even at the risk of some minor redundancy. Note that, strictly speaking, points in the Grassmannian are subspaces and should be denoted as $[\mathbf{X}]$. However, from this point onward, when computing distances or metrics between two constellation points, we omit the brackets for simplicity and simply write $\mathbf{X}$.

One of the most widely used metrics for designing Grassmannian constellations for noncoherent communications is the chordal distance \cite{do1}, \cite{grassma}, which we denote by $\textrm{d}_\textrm{c}$ and is defined as

\begin{equation}\label{dc}
\begin{aligned}
\textrm{d}_\textrm{c}\left( \mathbf{X}_i, \mathbf{X}_j \right) &= \frac{1}{\sqrt{2}}\| \mathbf{P}_{[\mathbf{X}_i]}-\mathbf{P}_{[\mathbf{X}_j]} \|_F \\
&= \left(\sum_{m=1}^{M}\sin^2\theta_m\right)^{1/2} ,
\end{aligned}
\end{equation} where, based on Eq. \eqref{pans3}, it is clear that the maximum value for the chordal distance is $\sqrt{M}$. Compared to the geodesic distance, this metric has the numerical advantage of being differentiable at any point on the manifold.

However, the study presented in \cite{do3} revealed that the chordal distance between two constellation points is related to their PEP only at low SNR, while at high SNR, the so-called DP is a better criterion for minimizing the PEP between two codewords. The DP between two constellation points is defined as \cite{do2}, \cite{do3}

\begin{equation}\label{dp}
\begin{aligned}
\textrm{DP}(\mathbf{X}_i, \mathbf{X}_j) &= \textrm{det} \left(\mathbf{I}_M - \mathbf{X}_i^H\mathbf{X}_j\mathbf{X}_j^H\mathbf{X}_i  \right)\\
&= \prod_{m=1}^M \sin^2\theta_m.
\end{aligned}
\end{equation}

In general, when designing a constellation based on either the chordal distance or DP, the goal is to maximize the minimum value across all possible pairs. In this context, it is useful to define the chordal distance of a constellation $\mathcal{X}$ as

\begin{equation}\label{dcc}
\textrm{d}_\textrm{c}\left( \mathcal{X} \right) = \min_{i\neq j} \; \textrm{d}_\textrm{c}\left( \mathbf{X}_i, \mathbf{X}_j \right),
\end{equation} and the DP of a constellation as

\begin{equation}\label{dpc}
\textrm{DP}\left( \mathcal{X} \right) = \min_{i\neq j} \; \textrm{DP}\left( \mathbf{X}_i, \mathbf{X}_j \right),
\end{equation} where the objective of minimizing $\textrm{DP}\left(\mathcal{X}\right)$ is also known as the coherence criterion \cite{dp}.

Based on the previous explanation, note that the chordal and DP approaches focus on improving the worst-case PEP. This strategy is standard for Gaussian channels affected only by AWGN, where the SEP is typically dominated by the largest PEP. However, under Rayleigh fading, particularly at medium or low SNR, considering codeword pairs beyond the worst case may significantly improve the accuracy of SEP approximations. In \cite{ubp} and \cite{ub}, the asymptotic UB is presented as a metric that accounts for all codeword pairs when evaluating a constellation. In \cite{ub}, the UB at high SNR is defined, up to a constant, as

\begin{equation}\label{ubc}
\textrm{UB}(\mathcal{X}) = \sum_{i<j} \textrm{det} \left(\mathbf{I}_M - \mathbf{X}_i^H\mathbf{X}_j\mathbf{X}_j^H\mathbf{X}_i  \right) ^{-N}, 
\end{equation} where, remarkably, the number of receive antennas, $N$, appears explicitly in the expression. Minimizing the UB of the constellation is desirable, as it has been proven in \cite{ub} and \cite{aerr} to have the same high-SNR exponent as the error probability. Nonetheless, as indicated in \cite{sparse}, when $\left(\mathbf{I}_M - \mathbf{X}_i^H \mathbf{X}_j \mathbf{X}_j^H \mathbf{X}_i\right)$ is not full rank, the determinant vanishes, which makes certain error-probability expressions based on the UB inadequate for those cases.

In \cite{ub}, a gradient descent optimization is presented to minimize the UB for arbitrary values of $(T,M,N,L)$. To the best of our knowledge, the work in \cite{ub} provides the best Grassmannian constellation designs in terms of reported numerical SER results. Table \ref{table1} summarizes the considered metrics.

\renewcommand{\arraystretch}{1.5} 
\renewcommand\cellgape{\Gape[4pt]} 
\begin{table*}[t]
\caption{Metrics and relationship to PEP.}
\centering
\begin{tabular}{|c|c|c|c|c|c|c|c|c|}
\hline
\textbf{Metric} & \textbf{Definition} & \textbf{Range} & \textbf{Goal} & \textbf{PEP relationship} \\ \hline \hline \makecell{Geodesic distance}& \makecell{$\textrm{d}_\textrm{g}\left(\mathbf{X}_i,\mathbf{X}_j\right) = \left(\sum_{m=1}^M \theta_m^2 \right)^{1/2}$ \\ [6pt] $\textrm{d}_\textrm{g}\left( \mathcal{X} \right) = \min_{i\neq j} \; \textrm{d}_\textrm{g}\left( \mathbf{X}_i, \mathbf{X}_j \right)$} & $\left[0,\sqrt{M}\frac{\pi}{2}\right]$ & - & -  \\ \hline

\hline \makecell{Chordal distance}& \makecell{ $\textrm{d}_\textrm{c}\left(\mathbf{X}_i,\mathbf{X}_j\right) = \left(\sum_{m=1}^M \sin^2\theta_m \right)^{1/2}$ \\ [6pt] $\textrm{d}_\textrm{c}\left( \mathcal{X} \right) = \min_{i\neq j} \; \textrm{d}_\textrm{c}\left( \mathbf{X}_i, \mathbf{X}_j \right)$ } & $\left[0,\sqrt{M}\right]$ & Max. min.  & Worst pair, low SNR\\ \hline

\hline \makecell{Diversity product (DP)}& \makecell{ $\textrm{DP}\left(\mathbf{X}_i,\mathbf{X}_j\right) = \prod_{m=1}^M \sin^2\theta_m$ \\ [6pt] $\textrm{DP}\left( \mathcal{X} \right) = \min_{i\neq j} \; \textrm{DP}\left( \mathbf{X}_i, \mathbf{X}_j \right)$ } & $\left[0,1\right]$ & Max. min.  & Worst pair, high SNR \\ \hline

Union bound (UB) & \makecell{$\textrm{UB}(\mathcal{X}) = \sum_{i<j} \textrm{det} \left(\mathbf{I}_M - \mathbf{X}_i^H\mathbf{X}_j\mathbf{X}_j^H\mathbf{X}_i  \right) ^{-N}$} &  $[1,\infty)$ & Minimize  & All pairs, high SNR \\ \hline

\end{tabular}
\label{table1}
\end{table*}

\section{Constellation Design}\label{cd}

In this section, we present the algorithm for designing Grassmannian constellations using geodesic mapping. The input parameters are the number of transmit antennas, $M$ (with $T = 2M$, as described in Section \ref{gsa}), and the number of constellation points, $L$, where $L$ is a power of two that satisfies $2 \leq L \leq 4M^2$. The goal is to obtain a constellation $\mathcal{X}$ with two properties: first, all constellation points must satisfy that each row of the transmitted space-time matrix contains a single nonzero entry, i.e., only one antenna is active per time slot; second, error performance at least comparable to previous constellation designs, as validated in Section \ref{comp}. Additionally, the algorithm inherently simplifies the problem of bit labeling, as explained in Section \ref{bl}, and the second property reduces the computational complexity of the ML detector, as detailed in Section \ref{rcc}.

As stated in Section \ref{cont}, the spectral efficiency is limited by the maximum number of points, $4M^2$, as shown in Table \ref{table2}, since $R = \log_2(L)/T$ (see Section \ref{SM}). However, it should be noted that higher spectral efficiency values with arbitrary Grassmannian constellations are constrained by the computational cost at the receiver when ML detection is applied. As explained in Section \ref{rcc}, the computational cost of the ML detector for arbitrary Grassmannian constellations is $\mathcal{O}(LMTN)$, which quickly becomes dominated by $L$ as spectral efficiency increases. For example, consider the case $T=8$ and $M=4$ with a spectral efficiency of $1.5$ bps/Hz. This requires $L=4096$ points, exceeding the maximum considered even for unstructured designs (e.g., up to $L=2048$ points in \cite{ub}). This further reinforces the fact that rates in the range of $0.25$–$1$ bps/Hz are particularly relevant for the application of Grassmannian signaling in MIMO noncoherent schemes. Although designs with significantly higher spectral efficiency have been proposed, particularly for the SIMO case \cite{simo_grass}, they achieve this at the expense of reduced error performance, as ML detection is not applied.

\begin{table}[t]
\caption{Maximum number of points and spectral efficiency.}
\centering
\begin{tabular}{|c|c|c|}
\hline
$(T,M)$ & \textbf{\makecell{Maximum \\ number of points}} &  \textbf{\makecell{Maximum \\ spectral efficiency}} \\ \hline \hline

$(2,1)$ & $4$ & $1$ bps/Hz \\ \hline

$(4,2)$  & $16$ & $1$ bps/Hz\\ \hline

$(6,3)$  & $36$ $(32)$ & $5/6$ bps/Hz \\ \hline

$(8,4)$  & $64$ & $0.75$ bps/Hz \\ \hline

$(16,8)$  & $256$ & $0.5$ bps/Hz \\ \hline

\end{tabular}
\label{table2}
\end{table}

\subsection{Geodesics and Diametral Sets}\label{vgg}
We now present the mathematical results and the specific choices on which our design is based. We refer the reader to Appendix\ref{th} for the definitions of the mathematical objects and notions used here.

\begin{theorem}
\label{thm:geodesic}
Let $[\mathbf{U}] \in \textup{Gr}_{\mathbb{C}}(2M,M)$ be a point and $\mathbf{\Delta} \in T_{[\mathbf{U}]}\textup{Gr}_{\mathbb{C}}(2M,M)$ a vector of the form
\begin{equation}
\label{eq:tilde_matrix}
    \mathbf{U} =
    \begin{pmatrix}
    \tilde{\mathbf{U}} \\
    \mathbf{0}_M
    \end{pmatrix}, \quad
    \mathbf{\Delta} =
    \begin{pmatrix}
    \mathbf{0}_M \\
    \tilde{\mathbf{\Delta}}
    \end{pmatrix},
\end{equation}
where $\tilde{\mathbf{U}} \in \textup{St}_{\mathbb{C}}(M,M)$ and $\tilde{\mathbf\Delta} \in \mathbb{C}^{M\times M}$. If $\sqrt{M}\tilde{\mathbf{\Delta}} \in U(M)$, then $\mathbf{\Delta}$ is a diametral vector and the geodesic $\gamma_{\mathbf{\Delta}}$ admits the following expression:
\begin{equation}
\label{eq:geodesic}
    \gamma_{\mathbf{\Delta}}(t) =
    \begin{bmatrix}
    \begin{pmatrix}
    \cos{\left(\frac{t}{\sqrt{M}}\right)}\tilde{\mathbf{U}} \\
    \sqrt{M}\sin{\left(\frac{t}{\sqrt{M}}\right)}\tilde{\mathbf{\Delta}}
    \end{pmatrix}
    \end{bmatrix}, \quad \forall t \in \mathbb{R}.
\end{equation}
\end{theorem}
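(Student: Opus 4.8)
The plan is to work directly from the standard formula for geodesics on the complex Grassmann manifold, which (per Appendix\ref{th}) is obtained via the thin singular value decomposition of the tangent vector. Recall that if $\mathbf{U} \in \textup{St}_{\mathbb{C}}(2M,M)$ represents $[\mathbf{U}]$ and $\mathbf{\Delta}$ is a horizontal tangent vector with thin SVD $\mathbf{\Delta} = \mathbf{Q}\,\mathbf{\Sigma}\,\mathbf{V}^H$, then the geodesic with initial velocity $\mathbf{\Delta}$ has the representative
\begin{equation}
\label{eq:svd_geod}
\gamma_{\mathbf{\Delta}}(t) = \mathbf{U}\mathbf{V}\cos(\mathbf{\Sigma}t)\mathbf{V}^H + \mathbf{Q}\sin(\mathbf{\Sigma}t)\mathbf{V}^H,
\end{equation}
where $\cos$ and $\sin$ act entrywise on the diagonal of $\mathbf{\Sigma}$. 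First I would verify the hypotheses: the block form \eqref{eq:tilde_matrix} together with $\tilde{\mathbf{U}} \in \textup{St}_{\mathbb{C}}(M,M) = U(M)$ gives $\mathbf{U}^H\mathbf{U} = \tilde{\mathbf{U}}^H\tilde{\mathbf{U}} = \mathbf{I}_M$, so $\mathbf{U}$ is a valid Stiefel representative, and $\mathbf{U}^H\mathbf{\Delta} = \mathbf{0}_M$ because the nonzero blocks sit in complementary rows, so $\mathbf{\Delta}$ is horizontal. Writing $\tilde{\mathbf{\Delta}} = \frac{1}{\sqrt{M}}\tilde{\mathbf{W}}$ with $\tilde{\mathbf{W}} := \sqrt{M}\tilde{\mathbf{\Delta}} \in U(M)$, the thin SVD of $\mathbf{\Delta}$ is immediate: $\mathbf{\Delta} = \mathbf{Q}\,\mathbf{\Sigma}\,\mathbf{V}^H$ with $\mathbf{Q} = \bigl(\begin{smallmatrix}\mathbf{0}_M\\ \tilde{\mathbf{W}}\end{smallmatrix}\bigr)$ (whose columns are orthonormal since $\tilde{\mathbf{W}}$ is unitary), $\mathbf{\Sigma} = \frac{1}{\sqrt{M}}\mathbf{I}_M$, and $\mathbf{V} = \mathbf{I}_M$. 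In particular all $M$ singular values equal $1/\sqrt{M}$.

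The key computation is then to substitute this SVD into \eqref{eq:svd_geod}. Since $\mathbf{V} = \mathbf{I}_M$ and $\mathbf{\Sigma} = \frac{1}{\sqrt{M}}\mathbf{I}_M$, we get $\cos(\mathbf{\Sigma}t) = \cos(t/\sqrt{M})\,\mathbf{I}_M$ and $\sin(\mathbf{\Sigma}t) = \sin(t/\sqrt{M})\,\mathbf{I}_M$, so the geodesic representative becomes $\cos(t/\sqrt{M})\,\mathbf{U} + \sin(t/\sqrt{M})\,\mathbf{Q}$, which in block form is exactly the matrix appearing inside the brackets in \eqref{eq:geodesic} after replacing $\mathbf{Q}$ by $\bigl(\begin{smallmatrix}\mathbf{0}_M\\ \sqrt{M}\tilde{\mathbf{\Delta}}\end{smallmatrix}\bigr)$. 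It remains to justify the diametral claim: because the geodesic is parametrized by arc length along each principal direction with common ``speed'' $1/\sqrt{M}$ in each of the $M$ coordinates, the principal angles between $\gamma_{\mathbf{\Delta}}(0)$ and $\gamma_{\mathbf{\Delta}}(t)$ are all equal to $t/\sqrt{M}$ (up to the usual reduction modulo $\pi/2$), so at $t = \sqrt{M}\,\pi/2$ every principal angle reaches $\pi/2$ simultaneously; this is the maximal configuration, attaining the diameter $\sqrt{M}\,\pi/2$ of $\textup{Gr}_{\mathbb{C}}(2M,M)$ recorded in Table \ref{table1}, which is the definition of a diametral vector. I would also note the periodicity: $\gamma_{\mathbf{\Delta}}$ is closed with period $\sqrt{M}\,\pi$ (since $\cos,\sin$ have period $2\pi$ but the subspace only depends on the representative up to $U(M)$, and $t \mapsto t + \sqrt{M}\pi$ sends the representative to its negative), consistent with the ``$\forall t \in \mathbb{R}$'' in the statement.

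The only genuinely delicate point is bookkeeping around the SVD-based geodesic formula — specifically making sure the conventions for horizontal lifts, the choice of metric normalization, and the ``thin'' versus ``full'' SVD match those fixed in Appendix\ref{th}, so that the constant $1/\sqrt{M}$ (rather than, say, $1$ or $\sqrt{M}$) comes out in the argument of the trigonometric functions and in the amplitude $\sqrt{M}$ multiplying $\tilde{\mathbf{\Delta}}$. Everything else is a direct substitution: there is no hard estimate or limiting argument, so the proof should be short once the formula \eqref{eq:svd_geod} is in hand. If the appendix instead states the geodesic via the matrix exponential of a block-skew-Hermitian ``horizontal'' matrix, I would alternatively verify \eqref{eq:geodesic} by checking that the right-hand side (i) is a Stiefel curve, i.e. its columns stay orthonormal for all $t$ (which reduces to $\cos^2 + M\sin^2\cdot\frac{1}{M} = 1$ together with $\tilde{\mathbf{U}}^H\tilde{\mathbf{U}} = \tilde{\mathbf{\Delta}}^H\tilde{\mathbf{\Delta}}M = \mathbf{I}_M$ and the vanishing cross term), (ii) has the correct initial point and velocity at $t=0$, and (iii) satisfies the geodesic equation $\ddot\gamma + \gamma(\dot\gamma^H\dot\gamma) = 0$ for the Grassmannian, each of which is a one-line verification with the block structure.
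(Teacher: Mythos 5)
Your proposal is correct and follows essentially the same route as the paper: it computes the thin SVD of $\mathbf{\Delta}$ with $\mathbf{\Sigma}=\frac{1}{\sqrt{M}}\mathbf{I}_M$, $\mathbf{V}=\mathbf{I}_M$ and $\mathbf{Q}$ exactly as you give them, and substitutes into the SVD-based geodesic formula \eqref{eq:geod_eq}. The only minor difference is the diametral claim, where the paper applies the cut-instant formula \eqref{eq:cut} to get $\text{Cut}_{[\mathbf{U}]}(\mathbf{\Delta})=\pi/(2\sigma_1)=\sqrt{M}\frac{\pi}{2}$ directly, whereas you argue that the unit-speed geodesic attains the diameter $\sqrt{M}\frac{\pi}{2}$ at $t=\sqrt{M}\frac{\pi}{2}$ and is therefore minimizing up to that instant --- an equivalent justification given the general bound on the cut instant for $g$-unit vectors.
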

\begin{proof}
See Appendix\ref{pr1}.
\end{proof}
The key implication of this result is that, if we choose $\tilde{\mathbf{U}}$ and $\tilde{\mathbf{\Delta}}$ so that each row contains only one nonzero entry, then each geodesic point admits a Stiefel representative that inherits this property. Our proposal is therefore to obtain the constellation points from Eq. \eqref{eq:geodesic} by carefully selecting one point $\mathbf{U}$ and $4M^2$ vectors $\mathbf{\Delta}$ in the form of Eq. \eqref{eq:tilde_matrix}.

To select suitable vectors, we first remark that, for any $\mathbf{U}$ in the form of Eq. \eqref{eq:tilde_matrix}, it is always possible to construct a basis $\{\mathbf{\Delta}_k\}_{k=1}^{2M^2}$ of $T_{[\mathbf{U}]}\textup{Gr}_{\mathbb{C}}(2M,M)$ satisfying the following properties:
\begin{itemize}
    \item Each $\mathbf{\Delta}_k$ has the form of Eq. \eqref{eq:tilde_matrix} and $\sqrt{M}\tilde{\mathbf{\Delta}}_k \in U(M)$, i.e., $\mathbf{\Delta}^H\mathbf{\Delta} = \tilde{\mathbf{\Delta}}^H\tilde{\mathbf{\Delta}} = \frac{1}{M}\mathbf{I}_M = \tilde{\mathbf{\Delta}}\tilde{\mathbf{\Delta}}^H$. In particular, this implies that every $\mathbf{\Delta}_k$ is $g$-unit (with respect to the Riemannian metric $g$ in Eq. \eqref{eq:riem_metric}).
    \item $\{\mathbf{\Delta}_k\}_{k=1}^{2M^2}$ is $g$-orthonormal, i.e., $g_{[\mathbf{U}]}(\mathbf{\Delta}_k,\mathbf{\Delta}_l)=\delta_{kl}$, where $\delta_{kl}$ is the Kronecker delta.
    \item Each $\tilde{\mathbf{\Delta}}_k$ has only one nonzero entry in each row.
\end{itemize}
The first condition ensures that every vector of the basis satisfies the hypotheses of Theorem \ref{thm:geodesic}. The second condition, together with the fact that each $\mathbf{\Delta}_k$ is diametral (by Theorem \ref{thm:geodesic}), allows us to fully exploit the distribution of the geodesics over the manifold. The last condition, if also satisfied by $\tilde{\mathbf{U}}$, guarantees that the constellation points obtained from Eq. \eqref{eq:geodesic} have exactly one nonzero entry per row. One example of such a basis can be constructed by taking as $\{\tilde{\mathbf{\Delta}}_k\}_{k=1}^{2M^2}$ the ($g$-orthonormal) Weyl-Heisenberg basis, given by Eq. \eqref{eq:wh_basis} (see Appendix\ref{wh}). Finally, observe that given $\{\mathbf{\Delta}_k\}_{k=1}^{2M^2}$, their opposites $\{-\mathbf{\Delta}_k\}_{k=1}^{2M^2}$ also form a basis of $T_{[\mathbf{U}]}\textup{Gr}_{\mathbb{C}}(2M,M)$ with the same properties.

Taking all this into account, we make the following explicit choices for $\mathbf{U}$ and $\mathbf{\Delta}$ of the form of Eq. \eqref{eq:tilde_matrix}: since there are no privileged points in the Grassmannian, we set $\tilde{\mathbf{U}} = \mathbf{I}_M$ for simplicity, and we select the $4M^2$ vectors $\{\pm\mathbf{\Delta}_k\}_{k=1}^{2M^2}$, where $\{\tilde{\mathbf{\Delta}}_k\}_{k=1}^{2M^2}$ is the Weyl-Heisenberg basis. In the constellation design, each vector provides a single constellation point by choosing a suitable $t \in \mathbb{R}$ in Eq. \eqref{eq:geodesic}, yielding a maximum of $4M^2$ constellation points.

For the DP and UB metrics (Section \ref{SEP}), we are interested in constellation points whose pairwise principal angles are all nonzero. At the same time, we must also ensure good performance with respect to the geodesic and chordal distances by distributing the constellation points as far apart as possible over the manifold. To identify such points, we make use of the following results.

\begin{theorem}
\label{thm:principal_angles}
    Within the hypotheses of Theorem \ref{thm:geodesic}, let $\gamma_{\mathbf{\Delta}_1}(t_1)$ and $\gamma_{\mathbf{\Delta}_2}(t_2)$ be two geodesic points satisfying Eq. \eqref{eq:geodesic}, with $t_1, t_2 \in \left( 0,\sqrt{M}\frac{\pi}{2} \right)$. Then:
    \begin{itemize}
        \item If $t_1 \not= t_2$, all the principal angles between both geodesic points are nonzero.
        \item If $t_1 = t_2$, the number of principal angles equal to zero between both geodesic points coincides with the number of eigenvalues of $M\mathbf{\Delta}_1^H \mathbf{\Delta}_2$ equal to $1$.
    \end{itemize}
\end{theorem}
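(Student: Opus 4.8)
## Proof Proposal for Theorem \ref{thm:principal_angles}

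The plan is to work with explicit Stiefel representatives and reduce the whole question to the spectrum of a single unitary matrix. Write $c_i = \cos(t_i/\sqrt{M})$ and $s_i = \sin(t_i/\sqrt{M})$ for $i = 1,2$; since $t_i \in \left(0,\sqrt{M}\frac{\pi}{2}\right)$, all four numbers are strictly positive. By Theorem \ref{thm:geodesic},
\[
\mathbf{A} = \begin{pmatrix} c_1\tilde{\mathbf{U}} \\ \sqrt{M}\,s_1\tilde{\mathbf{\Delta}}_1 \end{pmatrix}, \qquad \mathbf{B} = \begin{pmatrix} c_2\tilde{\mathbf{U}} \\ \sqrt{M}\,s_2\tilde{\mathbf{\Delta}}_2 \end{pmatrix}
\]
are Stiefel representatives of $\gamma_{\mathbf{\Delta}_1}(t_1)$ and $\gamma_{\mathbf{\Delta}_2}(t_2)$ (their column-orthonormality is precisely the identity underlying Eq. \eqref{eq:geodesic}). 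A block multiplication using $\tilde{\mathbf{U}}^H\tilde{\mathbf{U}} = \mathbf{I}_M$ then gives
\[
\mathbf{A}^H\mathbf{B} = c_1 c_2\,\mathbf{I}_M + s_1 s_2\,\mathbf{Q}, \qquad \mathbf{Q} := \bigl(\sqrt{M}\,\tilde{\mathbf{\Delta}}_1\bigr)^H\bigl(\sqrt{M}\,\tilde{\mathbf{\Delta}}_2\bigr) = M\mathbf{\Delta}_1^H\mathbf{\Delta}_2,
\]
where the last equality holds because the zero blocks in Eq. \eqref{eq:tilde_matrix} contribute nothing, and $\mathbf{Q}$ is unitary since $\sqrt{M}\,\tilde{\mathbf{\Delta}}_k \in U(M)$.

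Next I would exploit that $\mathbf{A}^H\mathbf{B}$ is a degree-one polynomial in the normal (indeed unitary) matrix $\mathbf{Q}$, hence normal and unitarily diagonalizable in the same eigenbasis as $\mathbf{Q}$. Writing the eigenvalues of $\mathbf{Q}$ as $e^{i\phi_1},\ldots,e^{i\phi_M}$, the eigenvalues of $\mathbf{A}^H\mathbf{B}$ are $c_1 c_2 + s_1 s_2 e^{i\phi_j}$; and since the singular values of a normal matrix are the moduli of its eigenvalues, the singular values of $\mathbf{A}^H\mathbf{B}$ are $\bigl|c_1 c_2 + s_1 s_2 e^{i\phi_j}\bigr|$ for $j = 1,\ldots,M$. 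As the cosines of the principal angles between $[\mathbf{A}]$ and $[\mathbf{B}]$ are exactly the singular values of $\mathbf{A}^H\mathbf{B}$ (Appendix\ref{th}), a principal angle vanishes iff the corresponding singular value equals $1$, so — counting multiplicities — the number of zero principal angles equals $\#\{\,j : |c_1 c_2 + s_1 s_2 e^{i\phi_j}| = 1\,\}$.

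The heart of the proof is then the elementary identity
\[
\bigl|c_1 c_2 + s_1 s_2 e^{i\phi}\bigr|^2 = 1 - (c_1 s_2 - s_1 c_2)^2 - 2\,c_1 c_2 s_1 s_2\,(1 - \cos\phi),
\]
obtained by expanding the left side and using $(c_1^2 + s_1^2)(c_2^2 + s_2^2) = 1$. Both subtracted terms are nonnegative, so the left side equals $1$ iff both vanish: the first vanishes iff $c_1 s_2 = s_1 c_2$, i.e. $\tan(t_1/\sqrt{M}) = \tan(t_2/\sqrt{M})$, i.e. $t_1 = t_2$ ($\tan$ being injective on $(0,\pi/2)$); the second vanishes iff $\cos\phi = 1$, i.e. $e^{i\phi} = 1$. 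Hence, if $t_1 \neq t_2$ no singular value of $\mathbf{A}^H\mathbf{B}$ equals $1$ and every principal angle is nonzero; whereas if $t_1 = t_2$, the singular value attached to $e^{i\phi_j}$ equals $1$ precisely when $e^{i\phi_j} = 1$, so the number of zero principal angles equals the multiplicity of the eigenvalue $1$ of $\mathbf{Q} = M\mathbf{\Delta}_1^H\mathbf{\Delta}_2$, which is the claim.

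I expect the only delicate points to be the two structural facts borrowed from the appendix and standard linear algebra — that the principal-angle cosines are the singular values of $\mathbf{A}^H\mathbf{B}$, and that a normal matrix has singular values equal to the moduli of its eigenvalues — together with keeping the multiplicity bookkeeping consistent when several of the $\phi_j$ coincide; everything else collapses to the one-line modulus identity above.
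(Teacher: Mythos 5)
Your proof is correct and follows essentially the same route as the paper: the same Stiefel representatives, the same computation showing $\mathbf{A}^H\mathbf{B}=c_1c_2\,\mathbf{I}_M+s_1s_2\,\mathbf{Q}$ with $\mathbf{Q}=M\mathbf{\Delta}_1^H\mathbf{\Delta}_2$ unitary, and the same normality argument reducing singular values to eigenvalue moduli. The only difference is cosmetic: you finish with the identity $|c_1c_2+s_1s_2e^{i\phi}|^2=1-(c_1s_2-s_1c_2)^2-2c_1c_2s_1s_2(1-\cos\phi)$, which packages the paper's two-case modulus analysis into a single nonnegative-deficit argument and is, if anything, a bit cleaner.
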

\begin{proof}
    See Appendix\ref{pr2}.
\end{proof}

\begin{corollary}
\label{cor:opposite}
    Let $\mathbf{\Delta} \in T_{[\mathbf{U}]}\textup{Gr}_{\mathbb{C}}(2M,M)$ be a vector satisfying the hypotheses of Theorem \ref{thm:geodesic}. Then, all the principal angles between $\gamma_{\mathbf{\Delta}}\left(\sqrt{M}\frac{\pi}{4}\right)$ and $\gamma_{-\mathbf{\Delta}}\left(\sqrt{M}\frac{\pi}{4}\right)$ are equal to $\frac{\pi}{2}$.
\end{corollary}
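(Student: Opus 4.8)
The plan is to substitute the specific parameter value $t=\sqrt{M}\,\frac{\pi}{4}$ into the geodesic expression of Theorem \ref{thm:geodesic} and then compute the matrix inner product of the two resulting Stiefel representatives. Since $\cos\!\left(\frac{\pi}{4}\right)=\sin\!\left(\frac{\pi}{4}\right)=\frac{1}{\sqrt{2}}$, Eq. \eqref{eq:geodesic} yields the Stiefel representative
\begin{equation*}
\mathbf{A} := \begin{pmatrix} \tfrac{1}{\sqrt{2}}\,\tilde{\mathbf{U}} \\ \sqrt{\tfrac{M}{2}}\,\tilde{\mathbf{\Delta}} \end{pmatrix}
\end{equation*}
for $\gamma_{\mathbf{\Delta}}\!\left(\sqrt{M}\,\frac{\pi}{4}\right)$, and, replacing $\mathbf{\Delta}$ by $-\mathbf{\Delta}$ (equivalently $\tilde{\mathbf{\Delta}}$ by $-\tilde{\mathbf{\Delta}}$), the representative
\begin{equation*}
\mathbf{B} := \begin{pmatrix} \tfrac{1}{\sqrt{2}}\,\tilde{\mathbf{U}} \\ -\sqrt{\tfrac{M}{2}}\,\tilde{\mathbf{\Delta}} \end{pmatrix}
\end{equation*}
for $\gamma_{-\mathbf{\Delta}}\!\left(\sqrt{M}\,\frac{\pi}{4}\right)$. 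Note that $-\mathbf{\Delta}$ is again of the form of Eq. \eqref{eq:tilde_matrix} with $\sqrt{M}(-\tilde{\mathbf{\Delta}})\in U(M)$, so it satisfies the hypotheses of Theorem \ref{thm:geodesic} and both expressions are legitimate.

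Next I would invoke the characterization of the principal angles between two Grassmannian points in terms of singular values (Eq. \eqref{pan}): if $\sigma_1,\dots,\sigma_M$ are the singular values of $\mathbf{A}^H\mathbf{B}$, then the principal angles are $\arccos\sigma_m$. Hence it suffices to show that $\mathbf{A}^H\mathbf{B}=\mathbf{0}_M$. Using the block structure,
\begin{equation*}
\mathbf{A}^H\mathbf{B} = \tfrac{1}{2}\,\tilde{\mathbf{U}}^H\tilde{\mathbf{U}} - \tfrac{M}{2}\,\tilde{\mathbf{\Delta}}^H\tilde{\mathbf{\Delta}}.
\end{equation*}
Since $\tilde{\mathbf{U}}\in\textup{St}_{\mathbb{C}}(M,M)=U(M)$ gives $\tilde{\mathbf{U}}^H\tilde{\mathbf{U}}=\mathbf{I}_M$, and the hypothesis $\sqrt{M}\,\tilde{\mathbf{\Delta}}\in U(M)$ gives $\tilde{\mathbf{\Delta}}^H\tilde{\mathbf{\Delta}}=\frac{1}{M}\mathbf{I}_M$, we obtain $\mathbf{A}^H\mathbf{B}=\frac{1}{2}\mathbf{I}_M-\frac{1}{2}\mathbf{I}_M=\mathbf{0}_M$. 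Therefore all singular values of $\mathbf{A}^H\mathbf{B}$ vanish, so every principal angle equals $\arccos 0=\frac{\pi}{2}$, which is the assertion.

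There is essentially no obstacle here: once the representatives are written down, the result is a one-line consequence of the two unitarity identities above. The only mild care needed is (i) noting that $\mathbf{A}$ and $\mathbf{B}$ are indeed orthonormal, which follows from $\cos^2+\sin^2=1$ together with $\tilde{\mathbf{U}}^H\tilde{\mathbf{U}}=\mathbf{I}_M$ and $M\tilde{\mathbf{\Delta}}^H\tilde{\mathbf{\Delta}}=\mathbf{I}_M$ (and is already implicit in Theorem \ref{thm:geodesic}), and (ii) using the singular-value description of principal angles. For internal consistency one may also remark that this is the $t_1=t_2$ branch of Theorem \ref{thm:principal_angles} with $\mathbf{\Delta}_1=\mathbf{\Delta}$, $\mathbf{\Delta}_2=-\mathbf{\Delta}$, for which $M\mathbf{\Delta}_1^H\mathbf{\Delta}_2=-M\mathbf{\Delta}^H\mathbf{\Delta}=-\mathbf{I}_M$ has no eigenvalue equal to $1$; this already shows all principal angles are nonzero, but pinning down the exact value $\frac{\pi}{2}$ still requires the direct computation $\mathbf{A}^H\mathbf{B}=\mathbf{0}_M$.
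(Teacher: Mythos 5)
Your proof is correct and follows essentially the same route as the paper: both compute the product of the two Stiefel representatives at $t=\sqrt{M}\frac{\pi}{4}$ and read off the principal angles from its singular values via Eq.~\eqref{pan}. The only cosmetic difference is that the paper obtains $\sigma_m=|\cos(\alpha_1+\alpha_2)|=0$ by specializing the eigenvalue formula from the proof of Theorem~\ref{thm:principal_angles} (with $\mathbf{W}_1^H\mathbf{W}_2=-\mathbf{I}_M$, i.e.\ $\phi_m=\pi$), whereas you show directly that $\mathbf{A}^H\mathbf{B}=\mathbf{0}_M$ using $\tilde{\mathbf{U}}^H\tilde{\mathbf{U}}=\mathbf{I}_M$ and $M\tilde{\mathbf{\Delta}}^H\tilde{\mathbf{\Delta}}=\mathbf{I}_M$ --- the same computation, just without routing it through the general machinery.
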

\begin{proof}
    See Appendix\ref{pr3}.
\end{proof}

This means that geodesic points associated with opposite vectors, when taken halfway along the geodesic, exhibit exactly the properties we are looking for: all their principal angles are maximal (in particular, nonzero), which in turn implies that they are separated by the maximum possible geodesic and chordal distances in the Grassmann manifold. Therefore, in our design, we are interested in obtaining constellation points from vector pairs $\pm\mathbf{\Delta}$ whenever possible, while ensuring at the same time that constellation points coming from different pairs do not have any principal angle equal to zero. With this in mind, we say that $\mathcal{G} \subset \{\pm\mathbf{\Delta}_k\}_{k=1}^{2M^2}$ is a {\em diametral set} if it satisfies the following conditions:
\begin{itemize}
    \item For any two vectors $\mathbf{\Delta}_1, \mathbf{\Delta}_2 \in \mathcal{G}$, there is no eigenvalue of $M\mathbf{\Delta}_1^H\mathbf{\Delta}_2$ equal to $1$.
    \item If $\mathbf{\Delta} \in \mathcal{G}$, then $-\mathbf{\Delta} \in \mathcal{G}$.
\end{itemize}
By Theorem \ref{thm:principal_angles}, the first condition means that all the principal angles between $\gamma_{\mathbf{\Delta}_1}(t)$ and $\gamma_{\mathbf{\Delta}_1}(t)$ are nonzero for all $t \in \left( 0,\sqrt{M}\frac{\pi}{2} \right)$. The second condition guarantees that, within the same diametral set $\mathcal{G}$, we can exploit the properties of geodesic points associated with opposite vectors $\pm\mathbf{\Delta} \in \mathcal{G}$. Note that both conditions are consistent, thanks to the fact that all the eigenvalues of $-M\mathbf{\Delta}_k^H\mathbf{\Delta}_k=-\mathbf{I}_m$ are equal to $-1$.

Given $M$, we denote by $D$ the cardinality of the largest diametral set that can be constructed from $\{\pm\mathbf{\Delta}_k\}_{k=1}^{2M^2}$. Observe that $D\geq 4$, since $\{\pm\mathbf{\Delta},\pm i\mathbf{\Delta}\}$ is always a diametral set for any $\mathbf{\Delta} \in \{\pm\mathbf{\Delta}_k\}_{k=1}^{2M^2}$. However, in order to achieve the maximal size $D$, it may be optimal to separate $\pm\mathbf{\Delta}$ and $\pm i\mathbf{\Delta}$ into different diametral sets.

\subsection{Algorithm Description} \label{ad}
Given the background developed in Section \ref{vgg}, we now present our proposed constellation design algorithm, Algorithm \ref{algc}. Note that the maximum size of a diametral set, $D$, is deterministic and determined by $M$ (which defines $\textup{Gr}_{\mathbb{C}}(2M,M)$) and the chosen vector basis. As an illustrative example of the algorithm, consider the case $M=2$ (thus $T=4$), which yields $D=8$. Table \ref{table3} presents the obtained results for all possible values of $L$ in terms of $\textrm{d}_\textrm{g}\left( \mathcal{X} \right)$, $\textrm{d}_\textrm{c}\left( \mathcal{X} \right)$, and $\textrm{DP}\left( \mathcal{X} \right)$. Importantly, in all tested scenarios with $M>1$, $D \leq 2M^2$. For $M=1$, $D=4$ and only Case (i) and Case (ii) in Algorithm \ref{algc} apply. Furthermore, in all cases, $D$ does not depend on the initial pair chosen to construct the first diametral set. These aspects are particularly relevant since Algorithm \ref{algc} requires at most two diametral sets such that $\mathcal{G}_1 \cap \mathcal{G}_2 = \emptyset$. Altogether, this implies that they can be obtained through a basic combinatorial search that begins by choosing any pair $\pm\mathbf{\Delta}$ and finding the maximum value of $D$. Finally, note that the different cases in Algorithm \ref{algc} reflect an effort to present a systematic constellation construction for any value of $M$ and $L\leq 4M^2$, while consistently achieving the best error performance we have identified by exploiting the geometric structure of the Grassmann manifold.

\begin{algorithm}
\caption{Constellation Design}
\label{algc}
\begin{algorithmic}[1]
    \REQUIRE $M$ (which defines $T$ and $D$), and $L$, $2\leq L \leq 4M^2$ 
    \ENSURE Constellation $\mathcal{X}$, $|\mathcal{X}|=L$

    \IF{$L=2$}
        \STATE \COMMENT{Case (i)}
        \STATE Select one $\mathbf{\Delta}_k$ and take the two points $\gamma_{\mathbf{\Delta}_k}(0)$ and $\gamma_{\mathbf{\Delta}_k}\left(\sqrt{M}\frac{\pi}{2}\right)$
    \ELSIF{$L=4$}
        \STATE \COMMENT{Case (ii)}
        \STATE Select two vector pairs in the same diametral set $\mathcal{G}$, $\pm\mathbf{\Delta}_1$ and $\pm\mathbf{\Delta}_2$, and adjust geodesic mapping taking the four points $\gamma_{\pm\mathbf{\Delta}_1}\left(\sqrt{M}\frac{\pi}{4}+x\right)$, $\gamma_{\pm\mathbf{\Delta}_2}\left(\sqrt{M}\frac{\pi}{4}-x\right)$
    \ELSIF{$4<L\leq D$}
        \STATE \COMMENT{Case (iii)}
        \STATE Select any diametral set $\mathcal{G}$ of size $L$, and take the points $\gamma_{\pm\mathbf{\Delta}_k}\left(\sqrt{M}\frac{\pi}{4}\right)$
    \ELSIF{$D<L\leq 2D$}
        \STATE \COMMENT{Case (iv)}
        \STATE Select any two diametral sets $\mathcal{G}_1$ and $\mathcal{G}_2$ of size $L/2$ without any repeated vectors, $\mathcal{G}_1 \cap \mathcal{G}_2=\emptyset$, and adjust geodesic mapping taking the points $\gamma_{\mathbf{\Delta}_1}\left(\sqrt{M}\frac{\pi}{4}+x\right)$  for all $\mathbf{\Delta}_1\in \mathcal{G}_1$, and $\gamma_{\mathbf{\Delta}_2}\left(\sqrt{M}\frac{\pi}{4}-x\right)$ for all $\mathbf{\Delta}_2\in \mathcal{G}_2$
    \ELSE  
        \STATE \COMMENT{Case (v)} 
        \STATE Take the points $\gamma_{\pm\mathbf{\Delta}_k}\left(\sqrt{M}\frac{\pi}{4}\right)$ for any $L/2$ vector pairs $\pm\mathbf{\Delta}_k$
    \ENDIF
\end{algorithmic}
\end{algorithm}

\begin{table}[t]
\caption{Algorithm performance for $T=4, M=2$}
\centering
\begin{tabular}{|c|c|c|c|}
\hline
$L$ & $\textrm{d}_\textrm{g}\left( \mathcal{X} \right)$ & $\textrm{d}_\textrm{c}\left( \mathcal{X} \right)$ & $\textrm{DP}\left( \mathcal{X} \right)$ \\ \hline \hline

$2$ & $\sqrt{2}\frac{\pi}{2}$ & $\sqrt{2}$ & $1$  \\ \hline

$4$ & $1.3508$ & $1.1546$ & $0.4442$  \\ \hline

$8$ & $1.1107$      & $1$ & $0.25$ \\ \hline

$16$ & $0.9888$      & $0.9102$ & $0.1715$ \\ \hline

\end{tabular}
\label{table3}
\end{table}

For Case (i) in Algorithm \ref{algc}, the solution is optimal with respect to all metrics in Table \ref{table1}, since all principal angles between the two points are equal to $\frac{\pi}{2}$. For Case (iii), the diametral set ensures that no zero principal angles exist between points in the constellation, and, additionally, mapping all geodesics at their midpoint provides good performance in terms of chordal distance. In Case (ii) and Case (iv), the geodesic mapping is adjusted through a single parameter, $x$. For Case (ii), compared to Case (iii), having only two vector pairs allows an adjustment that can be used to further enhance performance. In Case (iv), this adjustment becomes necessary to ensure a nonzero DP, since two diametral sets are involved. Both cases are illustrated in Fig. \ref{fs} for $T=4$, $M=2$, where the dotted line indicates the value of $x$ that minimizes the UB when $N=2$. Remarkably, in Fig. \ref{fsa} (Case (ii)), the optimal mapping point is the same across all metrics. In contrast, in Fig. \ref{fsb}, choosing the value of $x$ that maximizes DP slightly reduces the chordal and geodesic distances, though this value remains close to the one that minimizes UB. In general, for Case (ii) and Case (iv) with any $T$ and $M$, we select the value of $x$ that maximizes DP. In all observed scenarios, this $x$ is close to the value that minimizes UB (although this value varies slightly with $N$). As a result, choosing DP instead of UB has only a minor impact on the resulting constellation points and, consequently, a negligible effect on error performance (see Section \ref{ubdp}). Furthermore, \cite{sparse} showed that the UB expression in Eq. (\ref{ubc}) may not always be fully adequate for sparse matrices, which further justifies our default choice of the DP criterion.

Finally, for Case (v), numerical results obtained by adjusting geodesic mapping for more than two diametral sets achieve poor performance compared to gradient-based optimization approaches in terms of either UB \cite{ub} or DP \cite{dp}. Consequently, we forgo DP or UB optimization and instead adopt an approach focused on chordal distance. Remarkably, when mapping all $L=4M^2$ geodesics at $\sqrt{M}\frac{\pi}{4}$, the minimum chordal distance between points is $\sqrt{M}/\sqrt{2}$, i.e., the maximum possible chordal distance in the Grassmann manifold scaled by $1/\sqrt{2}$. Although not mathematically proven here, this result has been validated numerically for $M=1,2,\ldots,8$ and demonstrates excellent chordal distance performance for $L=4M^2$ points, leading to strong error performance at low SNR (see Section \ref{comp}).

Furthermore, this result is particularly noteworthy considering the Rankin simplex bound on chordal distance. Assuming $T=2M$, the Grassmann manifold $\textup{Gr}_{\mathbb{C}}(T,M)$ can be isometrically embedded into the sphere $ \mathbb{S}^{T^2-2} \subset \mathbb{R}^{T^2-1}$ of radius $\sqrt{M}/2$ \cite{sphere_grass}. Under these conditions, the Rankin simplex bound for spherical codes applies for $L \leq T^2=4M^2$, and it is given by $\textrm{d}_{\textrm{c}} \leq \sqrt{\frac{M}{2}\frac{L}{L-1}}$ \cite{packgrass}. Fig. \ref{bounds} compares the performance of Case (v) in terms of minimum pairwise chordal distance for $L = 4M^2$ points against this bound, showing their proximity. Note that $M \geq 3$ is considered, since Case (v) is not applied in Algorithm \ref{algc} for $M \leq 2$.

 \begin{figure}
\centering
 \subfloat[\normalsize $L=4$, Case (ii) \label{fsa}]{%
       \includegraphics[width=0.8\linewidth]{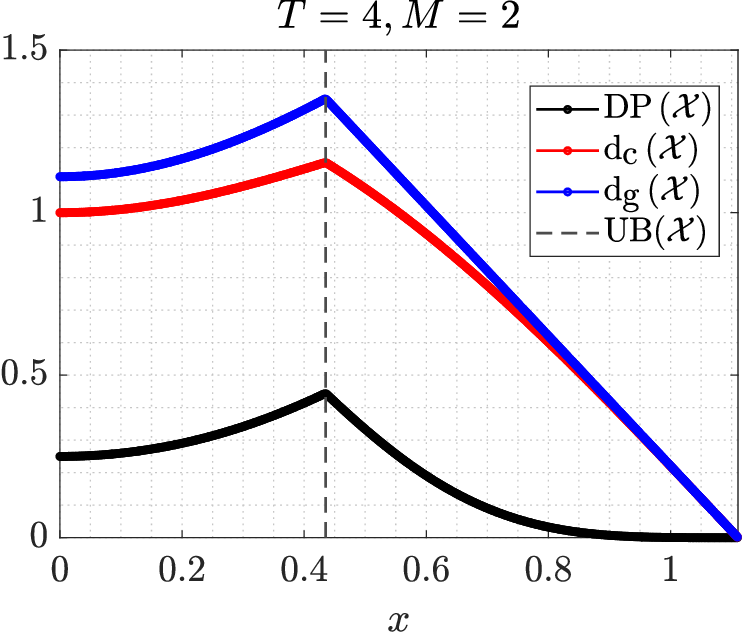}}
  \hfill
  \subfloat[\normalsize $L=16$, Case (iv)\label{fsb}]{%
        \includegraphics[width=0.8\linewidth]{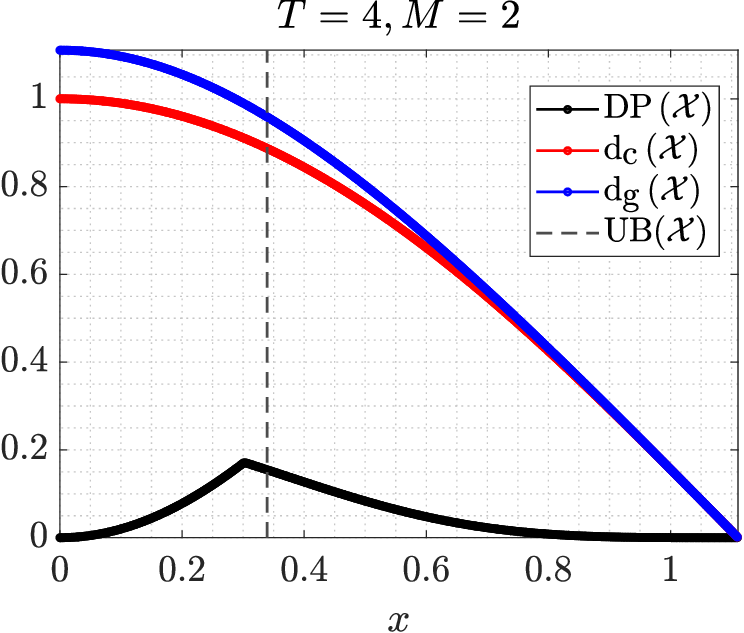}}
  \hfill
  \caption{Constellation metrics versus geodesic mapping parameter, $x$, for Case (ii) and Case (iv) of Algorithm \ref{algc} with $T=4$ and $M=2$. For the UB, the dotted line indicates the value of $x$ that achieves its minimum when $N=2$. A total of $5000$ values of $x$ are shown.}
  \label{fs}
\end{figure}

\begin{figure}
\centering
\includegraphics[width=0.8\linewidth]{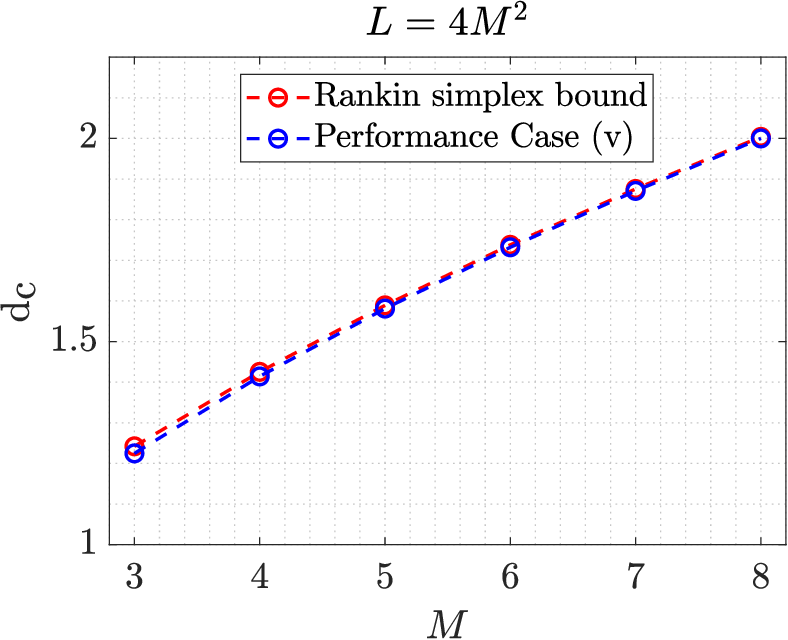}
\caption{Performance of Case (v) in Algorithm \ref{algc} in terms of minimum pairwise chordal distance compared against the Rankin simplex bound for spherical codes.}
\label{bounds}
\end{figure}

\subsection{Bit Labeling}\label{bl}

Once a constellation with the target number of points $L$ is obtained according to Algorithm \ref{algc}, the proposed bit labeling assigns pairs of codewords with the largest Hamming distance to points mapped on geodesics departing in opposite directions ($\pm\mathbf{\Delta}_k$). This assignment is always feasible, since vectors are selected in pairs for any even $L$, and the specific pairing between codewords and point pairs can be chosen arbitrarily. This guarantees that all principal angles between points with maximum Hamming distance are maximal within the constellation (and always nonzero, Section \ref{vgg}) and that such pairs achieve the lowest PEP.

Fig. \ref{geofig} illustrates the proposed point selection and bit labeling for $T = 2$, $M = 1$, and $L = 4$, which corresponds to Case (ii) in Algorithm \ref{algc}, since for $M = 1$, $D = 4$ (Section \ref{ad}). This particular case is appealing for a graphical representation because the complex Grassmannian $\textup{Gr}_{\mathbb{C}}(2,1)$ is isometric (i.e., geometrically equivalent) to the sphere $ \mathbb{S}^2 \subset \mathbb{R}^3 $ of radius $1/2$. In Fig. \ref{geofig}, the point $[\mathbf{U}] \in \textup{Gr}_{\mathbb{C}}(2,1)$, defined as in Eq. \eqref{eq:tilde_matrix} with $\tilde{\mathbf{U}}=(1)$, is identified with the north pole, and the vectors $\mathbf{\Delta}_1, \mathbf{\Delta}_2 \in T_{[\mathbf{U}]}\textup{Gr}_{\mathbb{C}}(2,1)$, defined as in Eq. \eqref{eq:tilde_matrix} with $\tilde{\mathbf{\Delta}}_1=(1)$ and $\tilde{\mathbf{\Delta}}_2=(i)$, are tangent vectors to the sphere at the north pole. In this case, with $L=D=4$, there exists a unique diametral set $\{\pm \mathbf{\Delta}_1,\pm \mathbf{\Delta}_2\}$. The geodesics $\gamma_{\mathbf{\pm\Delta}_1}$ are mapped at $t=1.0931$, while the geodesics $\gamma_{\mathbf{\pm\Delta}_2}$ are mapped at $t=0.4777$.

\begin{figure}
\centering
\includegraphics[width=0.8\linewidth]{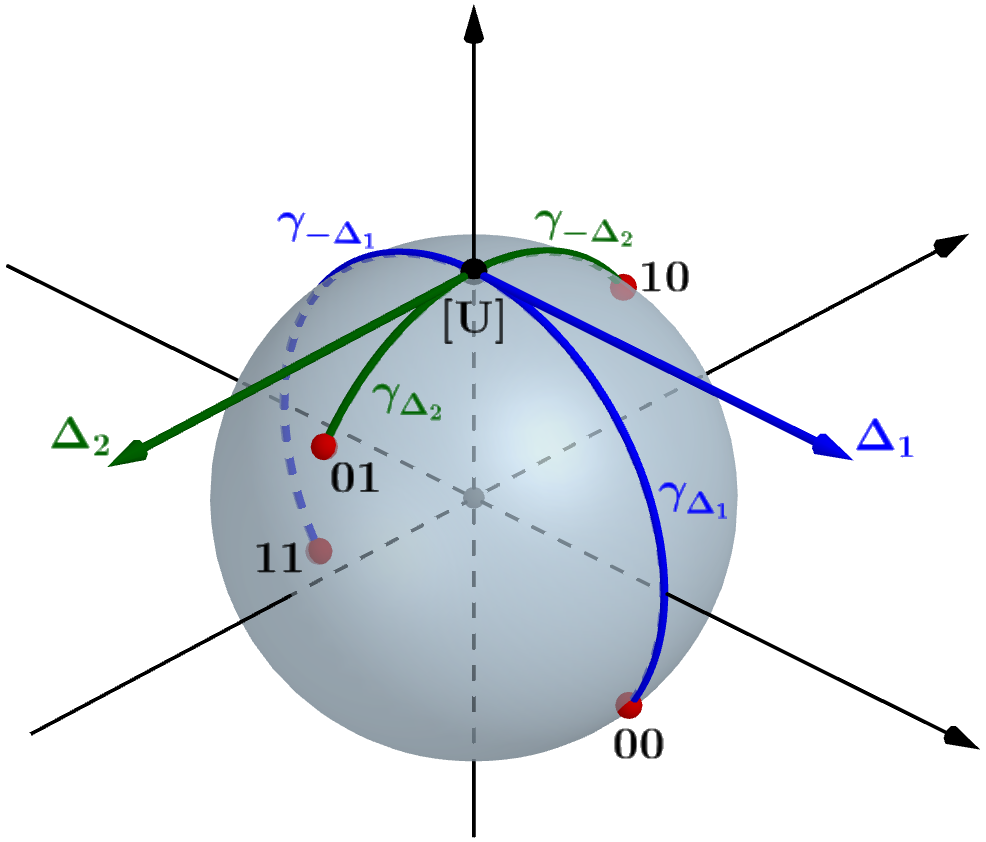}
\caption{Representation of point selection and bit labeling for $T=2$, $M=1$, and $L=4$ (Case (ii) in Algorithm \ref{algc}).}
\label{geofig}
\end{figure}

\subsection{Receiver Computational Complexity}\label{rcc}
Assuming the use of the ML detector, the argument in Eq. \eqref{ml} can be rewritten as

\begin{equation}
\begin{aligned}
&\operatorname{tr}\left( \mathbf{Y}^H \mathbf{P}_{[\mathbf{X}]}\mathbf{Y}\right) = \operatorname{tr}\left( \mathbf{Y}^H \mathbf{X}\mathbf{X}^H \mathbf{Y}\right) \\ &= \operatorname{tr}\left( \left(\mathbf{X}^H \mathbf{Y}\right)^H  \left(\mathbf{X}^H \mathbf{Y}\right)\right) = \|\mathbf{X}^H \mathbf{Y} \|_F^2,
\end{aligned}
\end{equation} where the computational complexity of evaluating this argument is dominated by the matrix multiplication $\mathbf{X}^H \mathbf{Y}$, which, for arbitrary Grassmannian constellations, results in a total complexity of $\mathcal{O}(LMTN)$ when comparing the $L$ possible codewords. However, for the proposed family of constellations, since each row of $\mathbf{X}$ contains a single nonzero entry whose position is known to the receiver, the multiplication $\mathbf{X}^H \mathbf{Y}$ can be performed by simply adding scaled rows of $\mathbf{Y}$. This reduces the complexity to $\mathcal{O}(LTN)$, yielding an improvement by a factor of $M$.

Nevertheless, the main drawback of ML detection generally arises from the exponential growth of $L$ with the target rate, which is further exacerbated by time-block transmission, yielding $L = 2^{RT}$ for a given rate. However, in the proposed design, the focus on a limited number of points further justifies the use of ML detection. Considering $T = 2M$, for the maximum constellation size of $L = 4M^2$, the resulting computational complexity of $\mathcal{O}(LTN)$ can be expressed as $\mathcal{O}(8M^3N)$, where constellation designs in the literature typically consider $M \leq 5$. Under these conditions, reduced-complexity detectors would offer limited computational gains for the considered family of constellations, while invariably degrading error performance. Finally, note that detection is performed only once every $T$ symbols. Therefore, when compared with non–time-block detection schemes, the computational complexity per symbol period becomes $\mathcal{O}(LN)$, or, for the maximum possible number of $L$, $\mathcal{O}(4M^2N)$.

\section{Numerical Results}\label{res}

In this section, we present numerical results based on SER and BER Monte Carlo simulations to evaluate the error performance of the designed constellations. In all cases, we consider the block-fading channel presented in Section \ref{SM} and the ML detector described in Eq. \eqref{ml}. For this scenario, the SNR is defined in Eq. \eqref{Y} by the term $\rho$. We always consider $T = 2M$ (Section \ref{gsa}), and $2 \leq L \leq 4M^2$, being $L$ a power of two and its maximum restricted by the proposed constellation design (Section \ref{cd}). In Section \ref{HWI}, we also present a preliminary analysis of certain hardware implementation aspects.

\subsection{SER and Achievable Rate Performance}

We begin by analyzing the SER performance of the proposed family of Grassmannian constellations across different scenarios. Fig. \ref{fr1} presents the SER results for a small constellation size, $L = 4$, under different values of $T$, $M$, and $N$. For the considered cases $T = 2$, $T = 4$, and $T = 8$, the corresponding spectral efficiencies are $R = 1$, $R = 0.5$, and $R = 0.25$ bps/Hz, respectively. As expected, for a fixed number of constellation points, increasing the dimensions determined by $(T, M)$ consistently improves the SER performance for the same value of receive antennas, $N$. Furthermore, for each pair $(T, M)$, and consistent with previous theoretical results in the literature (Section \ref{gsa}), the number of receive antennas should be at least $N=\lfloor T/2 \rfloor$ \cite{gt3}. This behavior is confirmed in Fig. \ref{fr1_2} and Fig. \ref{fr1_3}, where a remarkable improvement in performance is observed when $N$ reaches $N = 2$ and $N = 4$, respectively. Nevertheless, as observed across all figures, increasing the number of receive antennas continues to have a remarkable impact on SER performance, even when $N > \lfloor T/2 \rfloor$. Although, as explained in Section \ref{gsa},  when $T < M + N$ the capacity of Grassmannian signaling is outperformed by beta-variance space modulation, this difference becomes significant only when $N \gg T$ \cite{gt5}. This justifies studying the application of Grassmannian signaling with values of $N$ close to $T$, even when $T < M + N$.

\begin{figure*}
\centering
\subfloat[\normalsize $R=1$ $\textrm{bps/Hz}$ \label{fr1_1}]{%
      \includegraphics[width=0.33\linewidth]{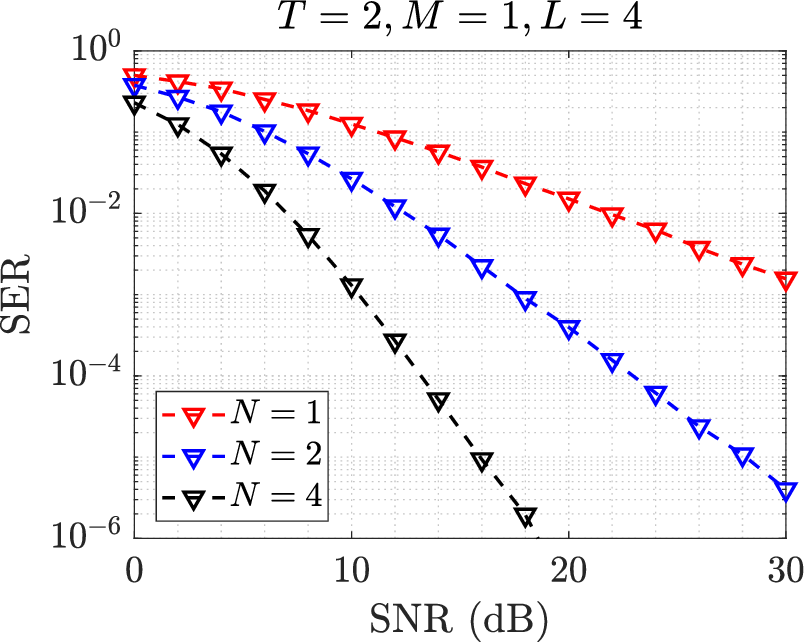}}
      \hfill
\subfloat[\normalsize  $R=0.5$ $\textrm{bps/Hz}$ \label{fr1_2}]{%
      \includegraphics[width=0.33\linewidth]{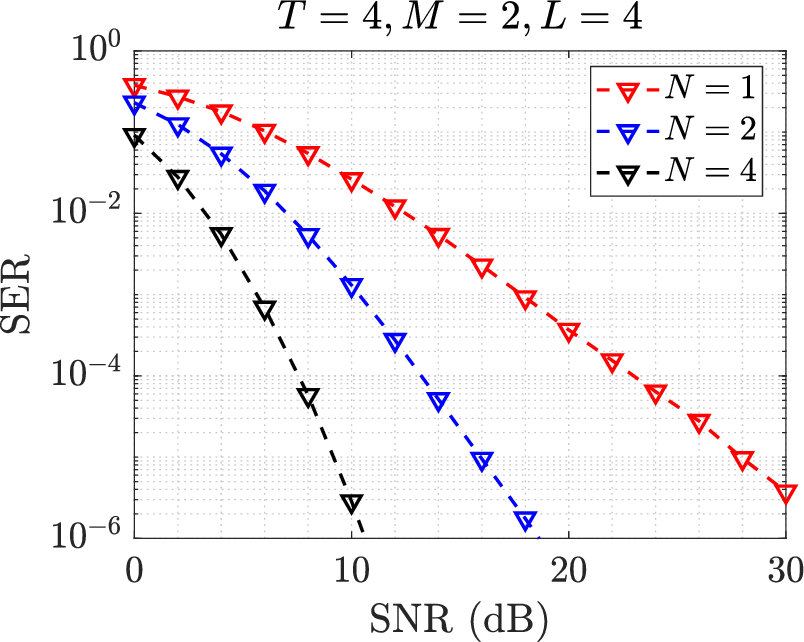}}
        \hfill
\subfloat[\normalsize $R=0.25$ $\textrm{bps/Hz}$  \label{fr1_3}]{%
      \includegraphics[width=0.33\linewidth]{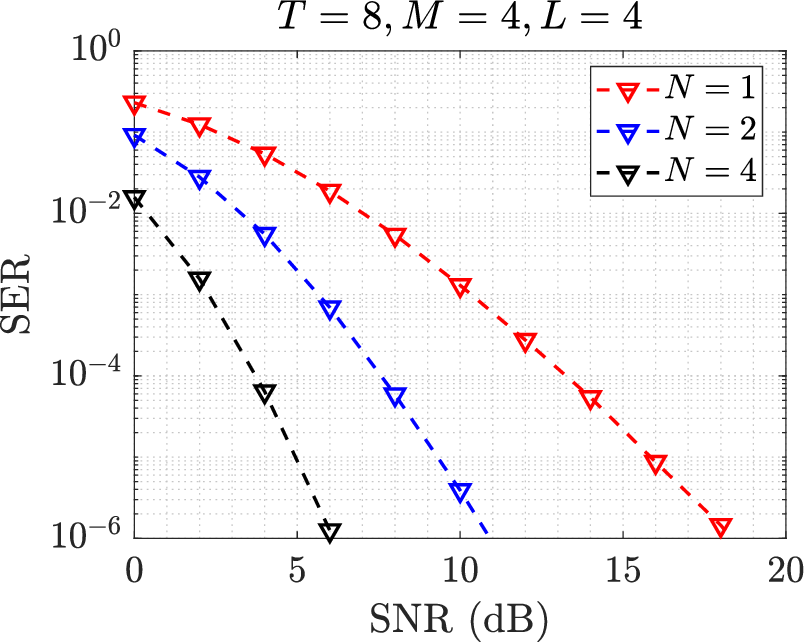}}
\caption{SER results for $L=4$ constellation points and different values of $T$, $M$, and $N$.}
\label{fr1}
\end{figure*}

Fig. \ref{fr2} compares different values of $T$ and $M$ with fixed values of receive antennas, $N = 4$, and spectral efficiency, $R = 0.5$ bps/Hz. Remarkably, increasing the pair $(T, M)$ consistently leads to improved SER performance, even though this implicitly requires increasing the number of constellation points $L$ to maintain the spectral efficiency. This condition ensures that the designed algorithm effectively exploits an increment in the dimensionality of the Grassmann manifold.

\begin{figure}
\centering
\includegraphics[width=0.8\linewidth]{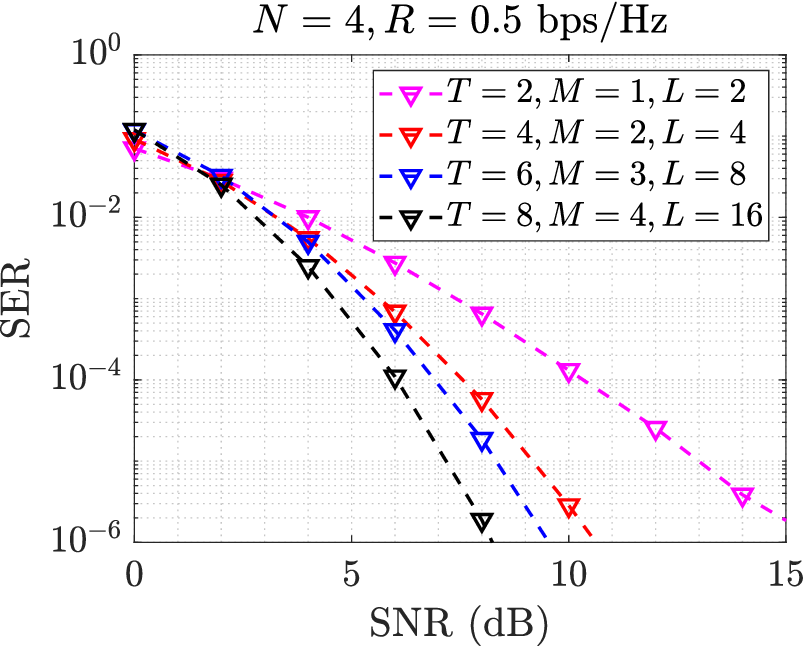}
\caption{SER results across different values of $T$ and $M$ for fixed values of receive antennas and spectral efficiency.}
\label{fr2}
\end{figure}

Finally, Fig. \ref{fr3} presents SER and achievable rate results for all possible values of $L$ with $T = 4$, $M = 2$, and $N = 2$. For $L=2$, $L=4$, $L=8$, and $L=16$, Case (i), Case (ii), Case (iii), and Case (iv) are applied in Algorithm \ref{algc}, respectively. Furthermore, for $L=16$, a comparison is presented between using DP (the default choice) and UB for adjusting geodesic mapping (see Section \ref{ad}). As predicted in Section \ref{ad}, since the constellation points obtained in the Grassmann manifold using both strategies are very similar (Fig. \ref{fsb}), their error performance is nearly identical. Note that, for $L=4$ under Case (ii), the same constellation points are obtained, as shown in Fig. \ref{fsa}. To compute the achievable rate, we perform Monte Carlo simulations based on the mutual information formulas presented in \cite{cubesplit}, adapted to the case of multiple transmit antennas.

 \begin{figure}
\centering
 \subfloat[\normalsize SER \label{fr3a}]{%
       \includegraphics[width=0.8\linewidth]{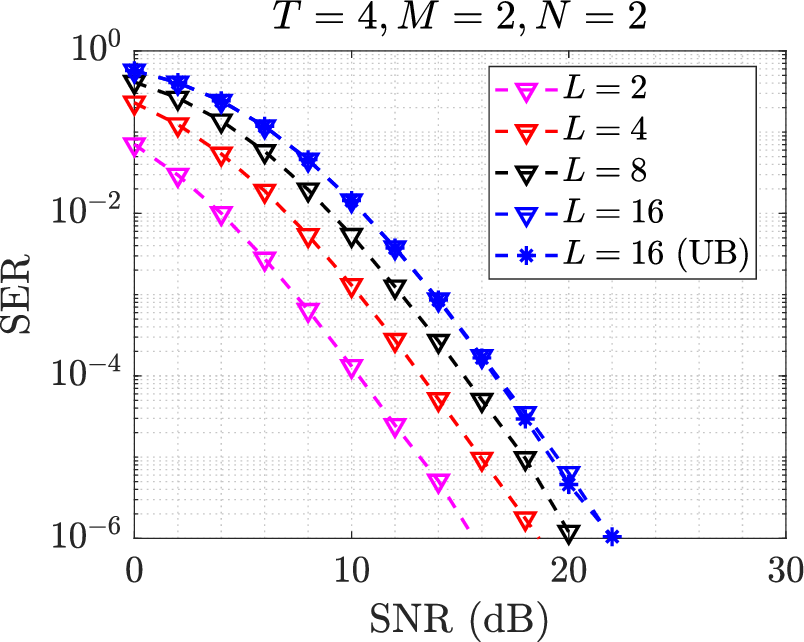}}
  \hfill
  \subfloat[\normalsize Achievable rate \label{fr3b}]{%
        \includegraphics[width=0.8\linewidth]{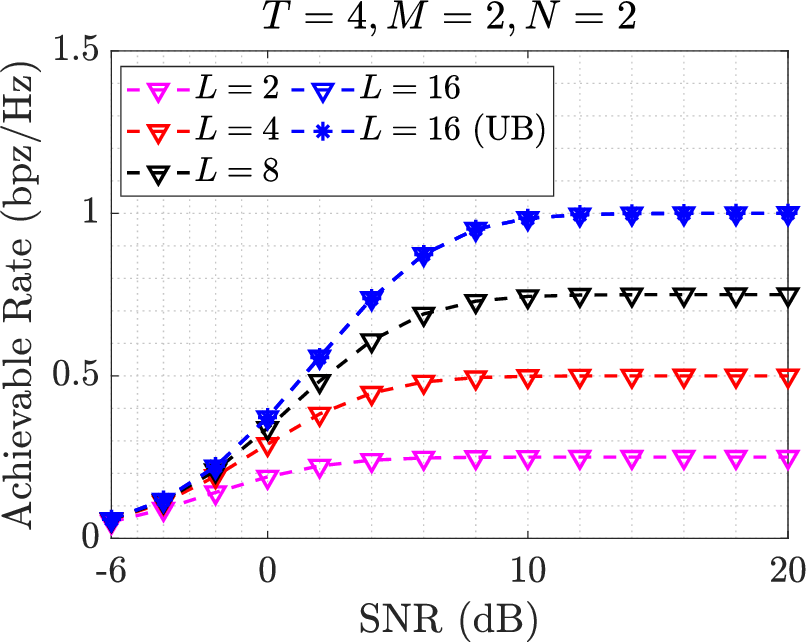}}
  \hfill
  \caption{SER and achievable rate results for all possible values of $L$ and fixed values of $T$, $M$, and $N$. Considering UB instead of DP for geodesic mapping in Algorithm \ref{algc} is denoted as (UB).}
  \label{fr3}
\end{figure}

\subsection{BER versus SER}\label{ubdp}

In Fig. \ref{fr4}, the bit labeling strategy described in Section \ref{bl} is evaluated by comparing SER and BER results across different values of $T$, $M$, $N$, and $L$, always using the maximum possible values of $L$, which for the considered cases are $L = 4$, $L = 16$, and $L = 64$ for $M = 1$, $M = 2$, and $M = 4$, respectively. For $M = 2$ and $M = 4$, the relation $\textrm{BER} \approx \frac{1}{2}\textrm{SER}$ is observed, which corresponds to the expected BER under the assumption that codewords with maximum Hamming distance have a negligible PEP, while all other codewords are equally likely to be mistaken. This validates the design goal of avoiding errors between codewords with maximum Hamming distance. For $M = 1$, we obtain $\textrm{BER} \approx \frac{2}{3},\textrm{SER}$, since in this particular case with $L = 4$, the DP (as well as the chordal and geodesic distances) are nearly identical across all codeword pairs.

\begin{figure}
\centering
\includegraphics[width=0.8\linewidth]{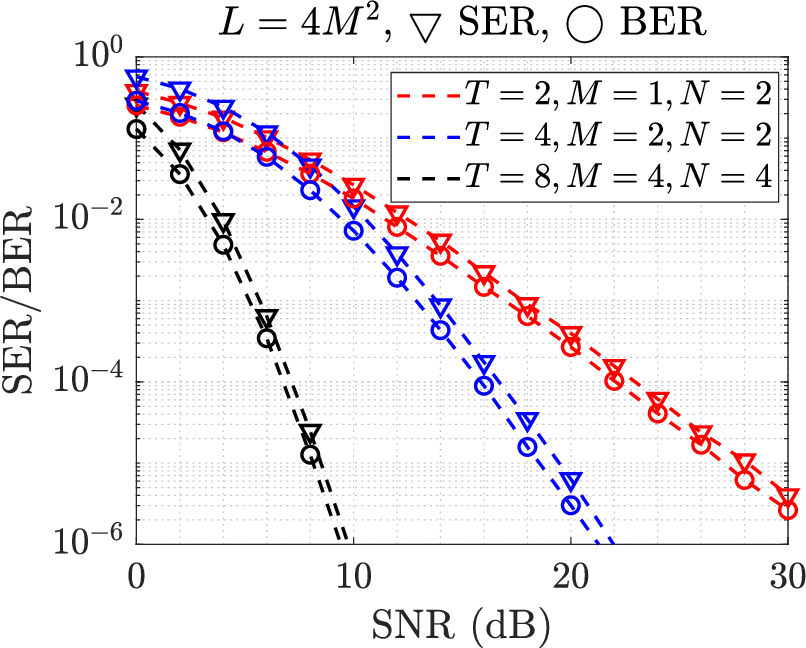}
\caption{Evaluation of bit labeling performance, comparing SER and BER results for equal values of $T$, $M$, $N$, and $L$.}
\label{fr4}
\end{figure}

\subsection{Comparison with Other Approaches}\label{comp}
Finally, we compare the SER performance of our proposed design with other algorithms from the literature. In particular, Fig. \ref{fr5} shows the SER results for the maximum number of points $L = 4M^2$ allowed by our design (GMap.), compared with the DP gradient ascent optimization proposed in \cite{dp} (Opt.), for which the code is publicly available \cite{thesis}. For the optimization case, the default parameters provided by the authors in the code are used, including a maximum of $400$ iterations and a minimum performance improvement threshold of $10^{-5}$, and the best result out of $100$ independent optimization runs is selected. Table \ref{table4} reports the corresponding values of the DP and chordal distance for the constellations obtained with both strategies. For $(T, M) = (2,1)$ with $L = 4$, the performance is identical in both cases. For $(T, M) = (4, 2)$ with $L = 16$, the geodesic mapping approach improves upon the optimization in terms of DP, resulting in slightly better SER performance. Remarkably, for $(T, M) = (8,4)$ with $L = 64$, Case (v) of Algorithm \ref{algc} is applied. In this case, the proposed design does not use DP as the objective function but instead relies on the chordal distance, as explained in Section \ref{ad}. As a result, the obtained DP is $0$, while the chordal distance is excellent, with a minimum value of $\sqrt{M}/\sqrt{2}$ when mapping all geodesics ($L = 64$), where the maximum possible chordal distance between two points is $\sqrt{M} = 2$. Notably, SER results in this case are nearly identical to those obtained with DP optimization. This is because, for the considered values $T = 2M$ with $M = N = 4$, the SER values of interest lie in the low-SNR range (below $10$ dB), where the chordal distance serves as a reliable predictor of the PEP (Section \ref{SEP}). For additional reference, for the particular case $T=4$, $M=2$, and $L=16$, Fig. \ref{fr5} also reports the SER performance of the structured, closed-form constellation obtained by embedding the product of two $M\times M$ unitary matrices into $T\times M$ codewords, following the geometrical design framework of \cite[Sections~IV--V]{geo_di}, denoted by $(\mathbf{A}^k\mathbf{B}^k)$.

In \cite[Fig. 3.4]{thesis}, it can be observed that, using the UB optimization proposed in \cite{ub} with $T = 4$, $M = 2$, $N = 2$, $L = 16$, and SNR$=20$ dB, the obtained SER is about $10^{-5}$, similar to that achieved with the DP optimization in Fig. \ref{fr5}, and slightly worse than our result. In general, results obtained with UB optimization have been shown to slightly improve upon those of DP optimization \cite[Fig. 3.5]{thesis}, and thus, we expect them to be comparable to the results achieved with our approach. Note that the main advantage of the UB optimization presented in \cite{ub} is its applicability to arbitrary values of $T$, $M$, $N$, and $L$, whereas our design is constrained in spectral efficiency by the condition $2 \leq L \leq 4M^2$. On the other hand, for the considered values of $L$, the geodesic mapping approach has the important advantage of requiring only one active antenna per time slot, without degrading SER performance compared to state-of-the-art unstructured Grassmannian constellation designs.

\begin{figure}
\centering
\includegraphics[width=0.8\linewidth]{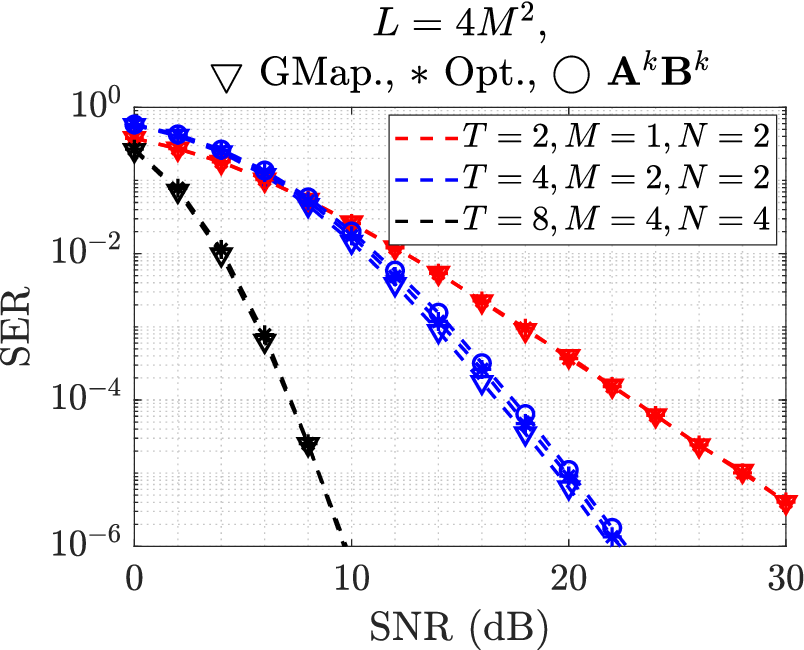}
\caption{SER comparison between the proposed Grassmannian constellation design (GMap.), the DP gradient optimization presented in \cite{dp} (Opt.), and the geometrical design presented in \cite[Sections IV-V]{geo_di} ($\mathbf{A}^k\mathbf{B}^k$) for $T=4$, $M=2$, and $L=16$.}
\label{fr5}
\end{figure}

\begin{table}[t]
\caption{Comparison of DP and Chordal Distance Between the Proposed Algorithm and the Optimization presented in \cite{dp} for $L=4M^2$.}
\centering
\begin{tabular}{|c|c|c|c|c|}
\hline
\multirow{2}{*}{$(T,M)$} & \multicolumn{2}{c|}{\textbf{Proposed}} & \multicolumn{2}{c|}{\textbf{Optimization \cite{dp}}} \\ \cline{2-5}
 & \textbf{$\textrm{DP}\left( \mathcal{X} \right)$} & \textbf{$\textrm{d}_\textrm{c}\left( \mathcal{X} \right)$} & \textbf{$\textrm{DP}\left( \mathcal{X} \right)$} & \textbf{$\textrm{d}_\textrm{c}\left( \mathcal{X} \right)$} \\ \hline \hline

$(2,1)$ & $0.6665$ & $0.8164$ & $0.6665$ & $0.8164$ \\ \hline

$(4,2)$ & $0.1715$ & $0.9102$ & $0.1263$ & $0.8494$ \\ \hline

$(8,4)$ & $0$      & $1.4142$ & $0.0046$ & $1.1467$ \\ \hline

\end{tabular}
\label{table4}
\end{table}

For additional reference, in Fig. \ref{fr6} we compare the obtained BER results for the proposed noncoherent design with those of space-time block coding (STBC) strategies that rely on CSIR but no CSIT \cite{book_stbc}. For $M = N = 2$, the noncoherent design (Noncoh.) uses the maximum spectral efficiency with $T = 4$ and $L = 16$ points, yielding $R = 1$ bps/Hz. For a fair comparison, the same spectral efficiency is selected for the STBC design (noted as CSIR), which in this case is based on the well-known Alamouti scheme with two transmit antennas \cite{alamouti}. Although this design strictly requires only a time block of $T = 2$ if CSIR is already available, under the assumption of a block-fading channel with $T = 4$, reliable CSIR acquisition within each block requires the transmitter to allocate $M$ time slots for transmitting orthogonal pilot symbols from each antenna \cite{mimo_tran}. As a result, within $T = 4$ time slots, only two information symbols are transmitted, requiring the use of a QPSK constellation (instead of BPSK) to achieve $R = 1$ bps/Hz, assuming that pilot symbols have the same average power as information symbols. Furthermore, in this case, the CSIR obtained via maximum-likelihood estimation will be imperfect due to AWGN and the reduced number of pilot symbols, which is accounted for in the Monte Carlo simulations. Under these conditions, the Alamouti scheme shows a similar performance in BER compared to the noncoherent approach. However, note that the Alamouti scheme, applicable only to $M = 2$, offers the best performance among STBC designs that rely on CSIR, due to its ability to simultaneously achieve full rate, full diversity, and symbol orthogonality, which is not possible for $M > 2$ \cite{book_stbc}. For $M = N = 4$, the noncoherent strategy uses $L = 64$ with $T = 8$, yielding $R = 0.75$ bps/Hz. To preserve full diversity and symbol orthogonality, the STBC design adopts a $3/4$ rate scheme, meaning that only three information symbols are transmitted over every four time slots \cite{book_stbc}, \cite{stbc34}. The same conditions for acquiring CSIR apply, resulting in a QPSK constellation achieving a rate of $0.75$ bps/Hz. In this case, the noncoherent strategy shows a consistent improvement in BER performance compared to the coherent approach, with an observed gap of about $1.5$ dB in SNR for low error probabilities. Finally, it should be noted that CSIR, while generally more feasible to obtain than CSIT, inevitably increases system complexity at both the transmitter and receiver, and that these coherent approaches require more than one active antenna per transmission time slot.

\begin{figure}
\centering
\includegraphics[width=0.8\linewidth]{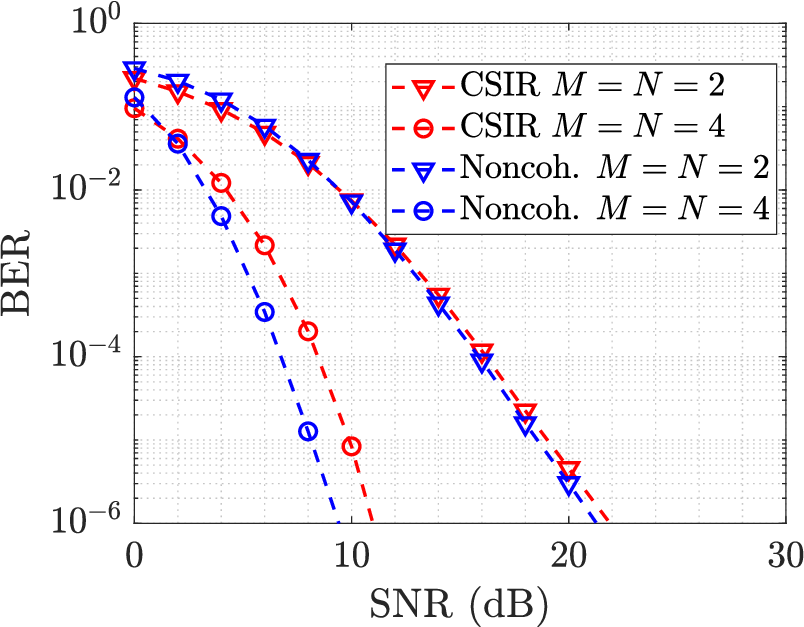}
\caption{BER comparison for equal spectral efficiency between the proposed noncoherent design and STBC strategies relying on CSIR. Spectral efficiencies are $1$ and $0.75$ bps/Hz for $M=N=2$ and $M=N=4$, respectively.}
\label{fr6}
\end{figure}

\section{Hardware Implementation Considerations}\label{HWI}

Finally, we present a preliminary evaluation of the feasibility of the hardware implementation of the proposed scheme. Although an in-depth analysis is beyond the scope of this work, any MIMO system based on a single RF chain and a switching mechanism, as proposed here to fully exploit the presented constellation design, is subject either to a reduction in data rate or to in-band distortion and adjacent-channel interference caused by the truncation of pulse-shaped signals \cite{hwi}.  Naturally, this effect does not appear for $M = 1$, and for $M > 2$ it could be significantly mitigated by using multiple RF chains, as shown in \cite{hwi} (while still using fewer than $M$, thereby preserving the system advantage in this regard). Nevertheless, for the case $M = 2$ (or $M > 2$ with a single RF chain), these effects cannot be avoided and therefore must be analyzed, even if only in a preliminary manner. If spectral-efficiency reduction is to be avoided, the adjacent-channel interference (also referred to as out-of-band distortion) can be mitigated by introducing RF filters after the switching mechanism and before the antennas. Note that this requires multiple filters in exchange for reducing the number of RF chains before the switch (including DACs, mixers, power amplifiers, bias circuits, etc.), with the advantage that filters are passive elements.

Nevertheless, such frequency filtering inevitably causes time spreading and, consequently, inter-symbol interference (ISI). Let us consider a root-raised-cosine (RRC) pulse with roll-off factor $\alpha$, denoted by $g(t)$. Assuming an ideal switch, the pulse is truncated in time to the interval $[-T/2, T/2)$. We can instead intentionally truncate and normalize the signal in the digital domain so that the switch only has the effect of selecting the active antenna. Accordingly, we denote the truncated signal by $\hat{g}(t)$, which satisfies $\int_{-\infty}^{\infty} |\hat g(t)|^2 dt = 1$. It is clear that the system favors higher values of $\alpha$, since they are less distorted by this truncation. For simplicity, we consider the RF filter to have a rectangular frequency response, i.e., an impulse response given by $ h_\beta(t)=\frac{2\beta}{T}\,\operatorname{sinc}\!\left(\frac{2\beta t}{T}\right)$, where $\beta$ is the normalized one-sided bandwidth measured in units of $1/T$, and $\operatorname{sinc}(x)=\frac{\sin(\pi x)}{\pi x}$. We denote the filtered signal by $\hat{g}_f(t)=\hat{g}(t)*h_\beta(t)$, where $*$ denotes convolution.

Filtering results in an energy loss, denoted by $L_\mathrm{P}$, which can be expressed in dB as

\begin{equation}\label{lp}
L_{\mathrm{P}}
=
-10\log_{10}
\left(
\int_{-\infty}^{\infty}
|\hat g_f(t)|^2\,dt
\right).
\end{equation}

The effect of the ISI can be roughly estimated through the residual ISI power normalized to the desired symbol power, denoted by $\rho_{\mathrm{ISI}}$, and computed as $\rho_{\mathrm{ISI}} = 2 \sum_{k=1}^{100} |d_k|^2$, where, for simplicity, only the one hundred closest (most relevant) symbol terms are considered, with $d_k$ being given by

\begin{equation} \label{dk}
d_k
=
\frac{
\int_{-\infty}^{\infty}
\hat g_f(t-kT)\hat g_f^*(t)\,dt
}{
\int_{-\infty}^{\infty}
|\hat g_f(t)|^2\,dt
},
\end{equation} where a matched filter for $\hat{g}_f(t)$ is used at the receiver. By numerically evaluating Eq.~(\ref{lp}) and Eq.~(\ref{dk}) with standard bandwidth $\beta = (1+\alpha)/2$ for $\alpha=0.2,0.35,0.5$ values, we obtain $L_{\mathrm{P}} = 0.66,0.46,0.32$ dB, and $\rho_{\mathrm{ISI}}$ values that correspond to signal-to-interference ratios (SIRs) of $18.26$ dB, $22.72$ dB and $28.17$ dB, respectively. Consequently, the combined effects of filtering-induced power loss and ISI transform the ideal SNR values into lower signal-to-noise-plus-interference ratio (SNIR) values, resulting in corresponding performance degradation, which can be adjusted through the selection of $\alpha$. Additionally, it is clear that the impact of ISI specifically becomes more pronounced at higher SNR values. As design examples, for $M=N=2$ (the most critical case), we consider an SNR of $20$ dB, which already achieves low SER and BER values (below $10^{-5}$) for the maximum supported rate (see Fig. \ref{fr4}). Selecting $\alpha=0.5$ results in an SNIR of $19.11$ dB, with a total degradation of less than $1$ dB. On the other hand, for $M=N=4$ with a single transmit RF chain, an SNR of $10$ dB is sufficient (see Fig. \ref{fr4}), allowing $\alpha$ to be reduced; specifically, $\alpha=0.35$ results in an SNIR of $9.34$ dB, again with a degradation below $1$ dB. Although this is a simplified analysis, it shows that the impact of ISI and power loss from pulse truncation and filtering remains limited, even when $\alpha$ is restricted to commonly used values.

\section{Conclusion}\label{con}

In this work, a structured family of Grassmannian constellations for MIMO noncoherent communications is presented. We begin by reviewing the system model and the metrics associated with the Grassmann manifold that have been previously analyzed in relation to SEP and PEP. Then, we present the proposed algorithm, which is based on computing and mapping geodesic curves on the Grassmann manifold departing from an initial point with a suitable set of initial velocities or tangent vectors. These initial velocities are systematically constructed using the canonical matrix representation of the finite-dimensional Weyl–Heisenberg group, although other sets of vectors satisfying the required properties could also be used. In this manner, we obtain the property that all points or matrices in the unitary space-time constellation contain a single nonzero entry per row, meaning that only one transmit antenna is active per time slot. This approach reduces hardware complexity and implementation cost, as the transmitter can be implemented with a single DAC, RF chain, and power amplifier, regardless of the number of available antennas. Power consumption is also reduced, since only one transmit chain is required along with an RF switch to select the active antenna per time slot. Additionally, the computational cost of ML detection is reduced by a factor of $M$. Considering $T=2M$, and taking at most one point per computed geodesic (except for $L=2$), the design limits the maximum number of points to twice the dimension of the complex Grassmannian (as a real manifold), i.e., $L \leq 4M(2M - M) = 4M^2$. Although this restricts spectral efficiency to the range of $0.25$–$1$ bps/Hz for the combinations of $T$ and $L$ of interest, this range is well suited to many use cases where noncoherent communications are applicable, such as short-packet URLLC communications in mobile scenarios where CSI acquisition is costly or unreliable. Furthermore, this design may be applicable to scenarios where SIMO has previously been considered, such as the uplink in IoT networks, by simply adding more antennas and a switching mechanism to the single-antenna transmitter.

Within the considered number of points $L$, the geodesic mapping can be adjusted to achieve good performance in terms of metrics related to SEP and PEP, while also allowing for a simple bit-labeling strategy that minimizes PEP between codewords with maximum Hamming distance. All of these properties are validated through Monte Carlo simulations for SER and BER. Remarkably, the proposed constellations achieve SER performance comparable to state-of-the-art unstructured designs, while maintaining their unique structure. While it is not always possible to achieve a nonzero DP (also known as a full diversity constellation or code), two factors mitigate its impact on error performance. First, in those cases, excellent chordal distance is achieved instead. Second, this situation arises only for $M \geq 3$, where, considering $N \geq M$, and given the relatively low number of points $L$, the SER values of interest fall within the low-SNR range, in which chordal distance serves as a good metric for reducing PEP. For additional reference, a comparison is presented with standard coherent STBC schemes that rely solely on CSIR. Grassmannian siganling outperforms this designs for $M>2$, especially due to the impact of imperfect CSIR in short coherent intervals, showing similar performance for $M=2$ with the Alamouti schemes. Furthermore, the considered STBC designs require more than one active antenna per time slot. 

Finally, we present a preliminary analysis of the hardware implementation feasibility of the proposed system when based on a single RF chain and a switching mechanism. Although such a system would incur out-of-band distortion and ISI, the former can be mitigated through the use of additional passive RF filters before the antennas (in exchange for reducing the number of RF chains with active elements), while preliminary calculations indicate that the impact of the latter remains limited, since the proposed constellations for $M, N > 1$ achieve low BER and SER values (below $10^{-5}$) for SNR values below $20$ dB even at their maximum supported rates.

To ensure reproducibility of the results, the source code of this work, along with examples of constellations obtained using our design, is available at: https://github.com/alvpr/grassmannian.

\appendices
\section*{Appendix}

\subsection{Theoretical Background} \label{th}
In this section, we present the background that provides the theoretical foundation of the paper, focusing on the elements that are explicitly used in this work. All of this material is standard and can be found, for example, in \cite{ghb} and \cite{cgrass}.

The {\em complex Grassmann manifold} or {\em complex Grassmannian} $\textup{Gr}_{\mathbb{C}}(T,M)$ is defined as the set of all $M$-dimensional complex subspaces $\mathcal{U}$ of $\mathbb{C}^T$. Each such subspace $\mathcal{U}$ can be represented by a matrix $\mathbf{U} \in \mathbb{C}^{T\times M}$ whose $M$ columns form an orthonormal basis of $\mathcal{U}$; we write in this case $\mathcal{U}=\operatorname{span}(\mathbf{U})$. These matrices are elements of the {\em complex Stiefel manifold}
\begin{equation*}
    \textup{St}_{\mathbb{C}}(T,M) \coloneqq \left\{\mathbf{U} \in \mathbb{C}^{T\times M}: \mathbf{U}^H\mathbf{U}=\mathbf{I}_M \right\}.
\end{equation*}
Given $\mathcal{U}$, there is a unique orthogonal projector $\mathbf{P}$ onto $\mathcal{U}$, that is, there exists a unique matrix $\mathbf{P} \in \mathbb{C}^{T\times T}$ such that $\mathbf{P}^H=\mathbf{P}$, $\mathbf{P}^2=\mathbf{P}$, and $\operatorname{range}(\mathbf{P})=\mathcal{U}$, which is given by $\mathbf{P} = \mathbf{U}\mathbf{U}^H$ for any $\mathbf{U} \in \textup{St}_{\mathbb{C}}(T,M)$ such that $\mathcal{U}=\operatorname{span}(\mathbf{U})$. Therefore, for any two matrices $\mathbf{U}_1,\mathbf{U}_2 \in \textup{St}_{\mathbb{C}}(T,M)$, it holds that $\operatorname{span}(\mathbf{U}_1)=\operatorname{span}(\mathbf{U}_2)$ if and only if $\mathbf{U}_2\mathbf{U}_2^H=\mathbf{U}_1\mathbf{U}_1^H$ or, equivalently, if and only if $\mathbf{U}_2=\mathbf{U}_1\mathbf{R}$, where $\mathbf{R}\in \mathbb{C}^{M\times M}$ is an element of the {\em unitary group}
\begin{equation*}
    U(M) \coloneqq \left\{\mathbf{R} \in \mathbb{C}^{M\times M}: \mathbf{R}^H\mathbf{R}=\mathbf{I}_M=\mathbf{R}\mathbf{R}^H \right\}.
\end{equation*}
This means that $\textup{Gr}_{\mathbb{C}}(T,M)$ can be identified with the quotient space $\textup{St}_{\mathbb{C}}(T,M)/U(M)$, i.e., each element $\mathcal{U} = \operatorname{span}(\mathbf{U})$ of $\textup{Gr}_{\mathbb{C}}(T,M)$ uniquely corresponds to the {\em equivalence class}
\begin{equation*}
    [\mathbf{U}] \coloneqq \left\{\mathbf{U}_1 \in \textup{St}_{\mathbb{C}}(T,M): \mathbf{U}_1=\mathbf{U}\mathbf{R}, \ \mathbf{R} \in U(M)\right\},
\end{equation*}
which, in turn, uniquely corresponds to the orthogonal projector $\mathbf{P}_{[\mathbf{U}]} \coloneqq \mathbf{U}\mathbf{U}^H$ onto $\mathcal{U}$. We say that $\mathbf{U} \in \textup{St}_{\mathbb{C}}(T,M)$ is a {\em Stiefel representative} of the subspace $\mathcal{U} = \operatorname{span}(\mathbf{U})$. Throughout this work, we identify $\mathcal{U} \equiv [\mathbf{U}]$ and write the elements of the Grassmannian, referred to as {\em points}, as $[\mathbf{U}] \in \textup{Gr}_{\mathbb{C}}(T,M)$.

$\textup{Gr}_{\mathbb{C}}(T,M)=\textup{St}_{\mathbb{C}}(T,M)/U(M)$ can be endowed with a differential structure that makes it a real compact manifold of dimension $2M(T-M)$. At each $[\mathbf{U}] \in \textup{Gr}_{\mathbb{C}}(T,M)$, the tangent space can be identified with
\begin{equation}
\label{eq:tangent}
    T_{[\mathbf{U}]}\textup{Gr}_{\mathbb{C}}(T,M) = \left\{\mathbf{\Delta} \in \mathbb{C}^{T\times M}: \mathbf{U}^H\mathbf{\Delta}=0\right\},
\end{equation}
where each element $\mathbf{\Delta} \in T_{[\mathbf{U}]}\textup{Gr}_{\mathbb{C}}(T,M)$ is called a {\em tangent vector} to $\textup{Gr}_{\mathbb{C}}(T,M)$ at $[\mathbf{U}]$.

We also endow $\textup{Gr}_{\mathbb{C}}(T,M)$ with the following Riemannian metric:
\begin{equation}
\label{eq:riem_metric}
    g_{[\mathbf{U}]}\left(\mathbf{\Delta}_1,\mathbf{\Delta}_2\right) \coloneqq \mathfrak{Re} \ \operatorname{tr}\left(\mathbf{\Delta}_1^H\mathbf{\Delta}_2\right),
\end{equation}
and given any curve $\gamma: [t_1,t_2] \rightarrow \textup{Gr}_{\mathbb{C}}(T,M)$, we define its {\em $g$-length} as
\begin{equation*}
    L_g(\gamma) \coloneqq \int_{t_1}^{t_2} \sqrt{g_{\gamma(t)}\left(\dot{\gamma}(t),\dot{\gamma}(t) \right)} \ \mathrm{d}t = \int_{t_1}^{t_2} ||\dot{\gamma}(t)||_g \ \mathrm{d}t,
\end{equation*}
where $||\cdot||_g \coloneqq \sqrt{g(\cdot,\cdot)}$ is the {\em $g$-norm} and $\dot{\gamma}(t) = \frac{d}{dt}\gamma(t) \in T_{\gamma(t)}\textup{Gr}_{\mathbb{C}}(T,M)$ denotes the velocity vector of the curve.

With this Riemannian metric, $\textup{Gr}_{\mathbb{C}}(T,M)$ is actually a totally geodesic Riemannian submanifold of the {\em real Grassmann manifold} $\textup{Gr}_{\mathbb{R}}(2T,2M)$ of $2M$-dimensional real subspaces of $\mathbb{R}^{2T}$, endowed with one half times the Frobenius (or Euclidean) metric, as in \cite[Section 3.1]{ghb}. This implies that all the formulas for angles, distances, and geodesics of the real Grassmannian translate exactly the same to the complex case, simply by replacing the transpose with the conjugate transpose. In particular:
\begin{itemize}
    \item The {\em principal angles} $\theta_1,\ldots,\theta_M \in \left[0,\frac{\pi}{2}\right]$ between two points $[\mathbf{U}_1], [\mathbf{U}_2] \in \textup{Gr}_{\mathbb{C}}(T,M)$ are a measure of the ``relative position'' between the two complex subspaces $\mathcal{U}_1=\operatorname{span}(\mathbf{U}_1)$ and $\mathcal{U}_2=\operatorname{span}(\mathbf{U}_2)$. They can be computed as
    \begin{equation}\label{pan}
        \theta_m = \arccos{(\sigma_m)} \in \left[0,\frac{\pi}{2}\right], \quad m = 1,\ldots, M,
    \end{equation}
    where $\sigma_m\in [0,1]$ is the $m$-th largest singular value of $\mathbf{U}_1^H\mathbf{U}_2$. This is independent of the chosen Stiefel representatives $\mathbf{U}_1, \mathbf{U}_2 \in \textup{St}_{\mathbb{C}}(T,M)$. In this work we are most interested in the case where all the principal angles are nonzero, which means that $\mathcal{U}_1 \cap \mathcal{U}_2 = \{0\}$. If one principal angle is $0$, then $\mathcal{U}_1$ and $\mathcal{U}_2$ share a nontrivial direction. If $\theta_m=0$ for all $m$, then $\mathcal{U}_1 = \mathcal{U}_2$. In contrast, if one principal angle is $\frac{\pi}{2}$, there is a direction in $\mathcal{U}_1$ orthogonal to $\mathcal{U}_2$, and vice versa. If $\theta_m=\frac{\pi}{2}$ for all $m$, then $\mathcal{U}_1$ and $\mathcal{U}_2$ are mutually orthogonal: every direction in $\mathcal{U}_1$ is orthogonal to every direction in $\mathcal{U}_2$.
    \item The {\em Riemannian distance} $\textrm{d}_\textrm{g}$ (induced by $g$ in Eq. \eqref{eq:riem_metric}) between two points $[\mathbf{U}_1], [\mathbf{U}_2] \in \textup{Gr}_{\mathbb{C}}(T,M)$ can be computed as
    \begin{equation}
    \label{eq:geod_dist}
        \textrm{d}_\textrm{g}\left([\mathbf{U}_1],[\mathbf{U}_2]\right) = \left(\sum_{m=1}^M \theta_m^2 \right)^{1/2},
    \end{equation}
    where $\{\theta_m\}_{m=1}^M$ are the principal angles between both points. This directly implies that $\textrm{d}_\textrm{g}$ is bounded by
    \begin{equation}
    \label{eq:dist_bound}
        \textrm{d}_\textrm{g}\left([\mathbf{U}_1],[\mathbf{U}_2]\right) \leq \sqrt{M}\frac{\pi}{2}, \quad \forall [\mathbf{U}_1],[\mathbf{U}_2] \in \textup{Gr}_{\mathbb{C}}(T,M).
    \end{equation}
    \item Given a point $[\mathbf{U}] \in \textup{Gr}_{\mathbb{C}}(T,M)$ and a vector $\mathbf{\Delta} \in T_{[\mathbf{U}]}\textup{Gr}_{\mathbb{C}}(T,M)$, the {\em geodesic} $\gamma_{\mathbf{\Delta}}(t)$ (with respect to the Riemannian metric $g$ in Eq. \eqref{eq:riem_metric}) departing at $\gamma_{\mathbf{\Delta}}(0)=[\mathbf{U}]$ with velocity $\dot{\gamma}_{\mathbf{\Delta}}(0)=\mathbf{\Delta}$ is given for any $t \in \mathbb{R}$ by
    \begin{equation}
    \label{eq:geod_eq}
        \gamma_{\mathbf{\Delta}}(t) = \left[\mathbf{U}\mathbf{V}\cos{(t\mathbf{\Sigma})}\mathbf{V}^H+\mathbf{Q}\sin{(t\mathbf{\Sigma})}\mathbf{V}^H\right],
    \end{equation}
    where the cosine and sine functions only apply to the diagonal entries of $t\mathbf{\Sigma}$, and $\mathbf{\Delta} = \mathbf{Q}\mathbf{\Sigma}\mathbf{V}^H$ is the compact singular value decomposition (SVD) of $\mathbf{\Delta}$, with $\mathbf{Q} \in \textup{St}_{\mathbb{C}}(T,M)$, $\mathbf{\Sigma}=\operatorname{diag}(\sigma_1,\ldots,\sigma_M)$, and $\mathbf{V} \in U(M)$.
\end{itemize}

Given two points $[\mathbf{U}_1], [\mathbf{U}_2] \in \textup{Gr}_{\mathbb{C}}(T,M)$, there always exists a geodesic from $[\mathbf{U}_1]$ to $[\mathbf{U}_2]$ whose $g$-length is exactly $\textrm{d}_\textrm{g}\left([\mathbf{U}_1],[\mathbf{U}_2] \right)$. This is why $\textrm{d}_\textrm{g}$ in Eq. \eqref{eq:geod_dist} is also called {\em geodesic distance}. Moreover, if we denote by $\gamma_{\mathbf{\Delta}}|_{[t_1,t_2]}$ the restriction of the geodesic $\gamma_{\mathbf{\Delta}}$ to the interval $[t_1,t_2]$, then the following equivalent properties hold for $t_2-t_1>0$ small enough:
    \begin{itemize}
        \item[(i)] $\gamma_{\mathbf{\Delta}}|_{[t_1,t_2]}$ locally realizes the Riemannian distance: the $g$-length of the geodesic segment $\gamma_{\mathbf{\Delta}}|_{[t_1,t_2]}$ is exactly $\textrm{d}_\textrm{g}\left(\gamma_{\mathbf{\Delta}}(t_1),\gamma_{\mathbf{\Delta}}(t_2) \right)$.
        \item[(ii)] $\gamma_{\mathbf{\Delta}}|_{[t_1,t_2]}$ is locally minimizing: the $g$-length of any curve from $\gamma_{\mathbf{\Delta}}(t_1)$ to $\gamma_{\mathbf{\Delta}}(t_2)$ is greater than or equal to the $g$-length of $\gamma_{\mathbf{\Delta}}|_{[t_1,t_2]}$.
    \end{itemize}
However, these properties do not hold globally in general (for arbitrarily big $t_2-t_1$). Given $[\mathbf{U}] \in \textup{Gr}_{\mathbb{C}}(T,M)$, we define the {\em cut instant} of $\mathbf{\Delta} \in T_{[\mathbf{U}]}\textup{Gr}_{\mathbb{C}}(T,M)$ as
    \begin{equation*}
        \text{Cut}_{[\mathbf{U}]}(\mathbf{\Delta}) \coloneqq \operatorname{sup} \left\{t>0: \gamma_{\mathbf{\Delta}}|_{[0,t]} \text{ is minimizing} \right\},
    \end{equation*}
which can be computed as
    \begin{equation}
    \label{eq:cut}
        \text{Cut}_{[\mathbf{U}]}(\mathbf{\Delta}) = \frac{\pi}{2\sigma_1},
    \end{equation}
where $\sigma_1$ is the largest singular value of $\mathbf{\Delta}$.

When $\mathbf{\Delta}$ is $g$-unit, i.e., $||\mathbf{\Delta}||_g \coloneqq \sqrt{g_{[\mathbf{U}]}(\mathbf{\Delta},\mathbf{\Delta})}=1$, then the $g$-length of $\gamma_{\mathbf{\Delta}}|_{[t_1,t_2]}$ is exactly $t_2-t_1$, for all $t_2>t_1$, and due to Eq. \eqref{eq:dist_bound}, the cut instant is bounded by
\begin{equation*}
    \text{Cut}_{[\mathbf{U}]}(\mathbf{\Delta}) \leq \sqrt{M}\frac{\pi}{2}, \quad \forall g\text{-unit } \mathbf{\Delta} \in T_{[\mathbf{U}]}\textup{Gr}_{\mathbb{C}}(T,M).
\end{equation*}
We say that a $g$-unit vector $\mathbf{\Delta}$ is a {\em diametral vector} if $\text{Cut}_{[\mathbf{U}]}(\mathbf{\Delta}) = \sqrt{M}\frac{\pi}{2}$. This means that the geodesic $\gamma_{\mathbf{\Delta}}$ travels the maximum possible distance in the manifold before properties (i) and (ii) cease to hold.

\subsection{Weyl-Heisenberg basis} \label{wh}
In this section, we present a specific choice of matrices that satisfies all the properties required for our constellation design (Section \ref{cd}). These matrices arise from the so-called displacement operators within the framework of the matrix representation of the finite-dimensional Weyl–Heisenberg group (see, e.g., \cite[Chapter~12]{weyl} and \cite{weyl2}).

Let us denote by $\{\ket{0},\ket{1},\ldots,\ket{M-1}\}$ the canonical orthonormal basis of $\mathbb{C}^M$, i.e.,
\begin{equation*}
    \ket{0} =
    \begin{pmatrix}
    1 \\
    0 \\
    \vdots \\
    0
    \end{pmatrix}, \quad
    \ket{1} =
    \begin{pmatrix}
    0 \\
    1 \\
    \vdots \\
    0
    \end{pmatrix}, \quad \ldots, \quad
    \ket{M-1} =
    \begin{pmatrix}
    0 \\
    0 \\
    \vdots \\
    1
    \end{pmatrix}.
\end{equation*}
We define:
\begin{itemize}
    \item The {\em shift operator} $\mathbf{X} \in U(M)$ by
    \begin{equation*}
        \mathbf{X}\ket{k} = \ket{k+1} \quad (\operatorname{mod} \ k), \quad k=0,\ldots,M-1.
    \end{equation*}
    \item The {\em clock operator} $\mathbf{Z} \in U(M)$ by
    \begin{equation*}
        \mathbf{Z}\ket{k} = \omega^k\ket{k}, \quad k=0,\ldots,M-1,
    \end{equation*}
    where $\omega \coloneqq e^{2\pi i/M}$.
\end{itemize}
Note that $\mathbf{X}^n$ has exactly one nonzero entry in each row and $\mathbf{Z}^n$ is diagonal, for any $n=1,\ldots,M-1$, while $\mathbf{X}^M=\mathbf{Z}^M=\mathbf{I}_M$. The {\em Weyl-Heisenberg matrices} $\mathbf{W}_{m,n} \in U(M)$ are then defined by
\begin{equation*}
    \mathbf{W}_{m,n} \coloneqq \tau^{mn} \mathbf{Z}^m \mathbf{X}^n, \quad m,n=0,\ldots,M-1,
\end{equation*}
where $\tau \coloneqq -e^{i\pi/M}$. These matrices inherit the property of having a single nonzero entry per row, and they satisfy
\begin{equation*}
    \operatorname{tr}\left(\mathbf{W}_{m,n}^H \mathbf{W}_{m',n'} \right) = M \delta_{mm'}\delta_{nn'}.
\end{equation*}
Therefore, the $2M^2$ matrices
\begin{equation}
\label{eq:wh_basis}
    \left\lbrace \tilde{\mathbf{\Delta}}_k \right\rbrace_{k=1}^{2M^2} \coloneqq  \left\lbrace \frac{1}{\sqrt{M}}\mathbf{W}_{m,n}, \frac{i}{\sqrt{M}}\mathbf{W}_{m,n} \right\rbrace_{m,n=0}^{M-1}
\end{equation}
satisfy the condition
\begin{equation*}
    \mathfrak{Re} \ \operatorname{tr}\left(\tilde{\mathbf{\Delta}}_k^H\tilde{\mathbf{\Delta}}_{k'}\right) = \delta_{kk'},
\end{equation*}
and they are called the ($g$-orthonormal) {\em Weyl-Heisenberg basis}.

\subsection{Proof of Theorem \ref{thm:geodesic}} \label{pr1}
If $\mathbf{\Delta} \in T_{[\mathbf{U}]}\textup{Gr}_{\mathbb{C}}(2M,M)$ is in the form of Eq. \eqref{eq:tilde_matrix}, with $\sqrt{M}\tilde{\mathbf{\Delta}} \in U(M)$, then its SVD $\mathbf{\Delta} = \mathbf{Q}\mathbf{\Sigma}\mathbf{V}^H$ is given by
\begin{equation*}
    \mathbf{Q} =
    \begin{pmatrix}
    \mathbf{0}_M \\
    \sqrt{M}\tilde{\mathbf{\Delta}}
    \end{pmatrix}, \quad
    \mathbf{\Sigma} = \frac{1}{\sqrt{M}}\mathbf{I}_M, \quad
    \mathbf{V}=\mathbf{I}_M,
\end{equation*}
and then, it is a straightforward computation to derive Eq. \eqref{eq:geodesic} from the geodesic expression in Eq. \eqref{eq:geod_eq}. Moreover, since $\mathbf{\Delta}^H\mathbf{\Delta}=\tilde{\mathbf{\Delta}}^H\tilde{\mathbf{\Delta}}=\frac{1}{M}\mathbf{I}_M$, clearly $\mathbf{\Delta}$ is $g$-unit with all its singular values equal to $\frac{1}{\sqrt{M}}$. From Eq. \eqref{eq:cut}, we obtain
\begin{equation*}
    \text{Cut}_{[\mathbf{U}]}(\mathbf{\Delta}) = \sqrt{M}\frac{\pi}{2},
\end{equation*}
which means that $\mathbf{\Delta}$ is a diametral vector.

\subsection{Proof of Theorem \ref{thm:principal_angles}} \label{pr2}
Let $\gamma_{\mathbf{\Delta}_1}(t_1)$ and $\gamma_{\mathbf{\Delta}_2}(t_2)$ be two geodesic points satisfying Eq. \eqref{eq:geodesic}, with $t_1, t_2 \in \left(0,\sqrt{M}\frac{\pi}{2}\right)$. This means that $\gamma_{\mathbf{\Delta}_1}$ and $\gamma_{\mathbf{\Delta}_2}$ depart from the same point $[\mathbf{U}]$ with initial velocities $\mathbf{\Delta}_1$ and $\mathbf{\Delta}_2$, respectively, such that $\mathbf{\Lambda}_1 \coloneqq \sqrt{M}\tilde{\mathbf{\Delta}}_1, \mathbf{\Lambda}_2 \coloneqq \sqrt{M}\tilde{\mathbf{\Delta}}_2 \in U(M)$. By a slight abuse of notation, we identify each geodesic point with the Stiefel representative given by Eq. \eqref{eq:geodesic} and write
\begin{equation*}
    \mathbf{\Gamma}(t_1,t_2) \coloneqq \gamma_{\mathbf{\Delta}_1}(t_1)^H\gamma_{\mathbf{\Delta}_2}(t_2).
\end{equation*}
Then, using $\tilde{\mathbf{U}}^H\tilde{\mathbf{U}}=\mathbf{U}^H\mathbf{U}=\mathbf{I}_M$, a straightforward computation shows that
\begin{equation*}
    \begin{split}
    \mathbf{\Gamma}(t_1,t_2) = \cos{\alpha_1}\cos{\alpha_2} \ \mathbf{I}_M + \sin{\alpha_1}\sin{\alpha_2} \ \mathbf{\Lambda}_1^H \mathbf{\Lambda}_2,
    \end{split}
\end{equation*}
where $\alpha_1 \coloneqq \frac{t_1}{\sqrt{M}}$ and $\alpha_2 \coloneqq \frac{t_2}{\sqrt{M}}$. Therefore, $\mathbf{\Gamma}(t_1,t_2)$ is a linear combination of $\mathbf I_M$ and $\mathbf{\Lambda}_1^H \mathbf{\Lambda}_2$. This means, on the one hand, that $\mathbf{\Gamma}(t_1,t_2)$ commutes with $\mathbf{\Lambda}_1^H \mathbf{\Lambda}_2$, so both matrices are simultaneously diagonalizable, i.e., the eigenvalues $\mu_m$ of $\mathbf{\Gamma}(t_1,t_2)$ are given by
\begin{equation}
\label{eq:eigenvalues}
    \mu_m = \cos{\alpha_1}\cos{\alpha_2} + \sin{\alpha_1}\sin{\alpha_2} \ \lambda_m, \quad m=1,\ldots,M,
\end{equation}
where $\lambda_m$ are the eigenvalues of $\mathbf{\Lambda}_1^H \mathbf{\Lambda}_2$. On the other hand, since $\mathbf I_M$ and $\mathbf{\Lambda}_1^H \mathbf{\Lambda}_2$ are two commuting normal matrices (observe that $\mathbf{\Lambda}_1^H \mathbf{\Lambda}_2 \in U(M)$), then $\mathbf{\Gamma}(t_1,t_2)$ is also normal, meaning that its singular values $\sigma_m$ are given by
\begin{equation}
\label{eq:sigma}
    \sigma_m = |\mu_m|, \quad m=1,\ldots,M,
\end{equation}
where $|\cdot|$ denotes the complex absolute value. Now, using that $\mathbf{\Lambda}_1^H \mathbf{\Lambda}_2$ is unitary, we can write its eigenvalues as $\lambda_m = \textrm{e}^{i\phi_m}=\cos{\phi_m}+i\sin{\phi_m}$, where $\phi_m \in [0,2\pi)$. Then, from Eq. \eqref{eq:eigenvalues} we obtain:
\begin{equation}
\label{eq:mu}
\begin{split}
    |\mu_m|^2 & = \left(\cos{\alpha_1}\cos{\alpha_2}+\sin{\alpha_1}\sin{\alpha_2}\cos{\phi_m}\right)^2 \\
    & + \left(\sin{\alpha_1}\sin{\alpha_2}\sin{\phi_m}\right)^2.
\end{split}
\end{equation}
Consequently, for each eigenvalue $\lambda_m$ of $\mathbf{\Lambda}_1^H \mathbf{\Lambda}_2 = M\tilde{\mathbf{\Delta}}_1^H \tilde{\mathbf{\Delta}}_2 = M\mathbf{\Delta}_1^H \mathbf{\Delta}_2$:
\begin{itemize}
    \item If $\lambda_m=1$, then $\phi_m=0$ and
    \begin{equation*}
        \sigma_m=|\cos{\alpha_1}\cos{\alpha_2}+\sin{\alpha_1}\sin{\alpha_2}| = |\cos({\alpha_1-\alpha_2})|.
    \end{equation*}
    \item If $\lambda_m\not=1$, then $\phi_m \not= 0$ (so $\cos{\phi_m}<1$) and
    \begin{equation*}
    \begin{split}
        \sigma_m & \leq |\cos{\alpha_1}\cos{\alpha_2}+\sin{\alpha_1}\sin{\alpha_2}\cos{\phi_m}| \\
        & < |\cos({\alpha_1-\alpha_2})| \leq 1.
    \end{split}
    \end{equation*}
\end{itemize}
Finally, recall that the principal angles $\theta_m$ between $\gamma_{\mathbf{\Delta}_1}(t_1)$ and $\gamma_{\mathbf{\Delta}_2}(t_2)$ are given by $\theta_m = \arccos{(\sigma_m})$ (see Eq. \eqref{pan}), which means that $\theta_m=0$ if and only if $\sigma_m=1$. Taking all this into account, we conclude:
\begin{itemize}
    \item If $t_1 \not= t_2$, then $\alpha_1 \not= \alpha_2$ and $\sigma_m < 1$ for all $m=1,\ldots,M$, i.e., all the principal angles $\theta_m$ are nonzero.
    \item If $t_1=t_2$, then $\sigma_m=1$ if and only if $\lambda_m=1$, i.e., the number of principal angles $\theta_m$ equal to zero coincides with the number of eigenvalues $\lambda_m$ equal to $1$.
\end{itemize}

\subsection{Proof of Corollary \ref{cor:opposite}} \label{pr3}
Following the same notation as in the proof of Theorem \ref{thm:principal_angles} (Appendix\ref{pr2}) with $\mathbf{\Delta}_1 = \mathbf{\Delta}$ and $\mathbf{\Delta}_2 = -\mathbf{\Delta}$, note that all the eigenvalues $\lambda_m=\mathrm{e}^{i\phi_m}$ of $-M\mathbf{\Delta}^H\mathbf{\Delta}=-\mathbf{I}_M$ are equal to $-1$, i.e., $\phi_m=\pi$ for all $m=1,\ldots,M$. Using Eq. \eqref{eq:sigma} and Eq. \eqref{eq:mu} with $\phi_m=\pi$ and $t_1=t_2=\sqrt{M}\frac{\pi}{4}$ ($\alpha_1=\alpha_2=\frac{\pi}{4}$), we obtain:
\begin{equation*}
\begin{split}
    \sigma_m & = |\cos{\alpha_1}\cos{\alpha_2}-\sin{\alpha_1}\sin{\alpha_2}| = |\cos({\alpha_1+\alpha_2})| \\
    & = \cos{\left(\frac{\pi}{2}\right)}=0, \quad m=1,\ldots,M,
\end{split}
\end{equation*}
which means that all the principal angles $\theta_m = \arccos{(\sigma_m)}$ between both geodesic points are equal to $\frac{\pi}{2}$.

\bibliographystyle{ieeetr}
\bibliography{references}

\end{document}